\newtheorem{theorem}{Theorem}
\newtheorem{lemma}{Lemma}
\renewcommand{\vec}[1]{\mathbf{#1}}
\def\blfootnote{\xdef\@thefnmark{}\@footnotetext}
\def\BibTeX{{\rm B\kern-.05em{\sc i\kern-.025em b}\kern-.08em
    T\kern-.1667em\lower.7ex\hbox{E}\kern-.125emX}}
\begin{document}

\title{Physical Layer Security in FAS-aided Wireless Powered NOMA Systems
}
\author{Farshad~Rostami~Ghadi,~\IEEEmembership{Member, IEEE},~Masoud~Kaveh,~Kai-Kit~Wong,~\IEEEmembership{Fellow},~\textit{IEEE}, Diego~Mart\'in, Riku~Jäntti,~\IEEEmembership{Senior Member},~\textit{IEEE}, and~Zheng~Yan,~\IEEEmembership{Fellow},~\textit{IEEE}}


	\maketitle
 
	\begin{abstract}
    The rapid evolution of communication technologies and the emergence of sixth-generation (6G) networks have introduced unprecedented opportunities for ultra-reliable, low-latency, and energy-efficient communication. However, the integration of advanced technologies like non-orthogonal multiple access (NOMA) and wireless powered communication networks (WPCNs) brings significant challenges, particularly in terms of energy constraints and security vulnerabilities. Traditional antenna systems and orthogonal multiple access schemes struggle to meet the increasing demands for performance and security in such environments.
To address this gap, this paper investigates the impact of emerging fluid antenna systems (FAS) on the performance of physical layer security (PLS) in WPCNs. Specifically, we consider a scenario in which a transmitter, powered by a power beacon via an energy link, transmits confidential messages to legitimate FAS-aided users over information links while an external eavesdropper attempts to decode the transmitted signals. Additionally, users leverage the NOMA scheme, where the far user may also act as an internal eavesdropper. For the proposed model, we first derive the distributions of the equivalent channels at each node and subsequently obtain compact expressions for the secrecy outage probability (SOP) and average secrecy capacity (ASC), using the Gaussian quadrature methods. Our results reveal that incorporating the FAS for NOMA users, instead of the TAS, enhances the performance of the proposed secure WPCN.
	\end{abstract}
 
	\begin{IEEEkeywords}
	Wireless powered communication network, fluid antenna system, physical layer security, non-orthogonal multiple access, secrecy outage probability, average secrecy capacity.
	\end{IEEEkeywords}
	\maketitle
	
    \blfootnote{The work of F. R. Ghadi and K. K. Wong is supported by the Engineering and Physical Sciences Research Council (EPSRC) under Grant EP/W026813/1. 
    The work of M. Kaveh, R. Jäntti, and Z. Yan is supported in part by the Academy of Finland under Grants 345072 and 350464.
    The work of D. Mart\'in has been developed within the project PRESECREL (PID2021-124502OB-C43). We would like to acknowledge the financial support of the Ministerio de Ciencia e Investigación (Spain), in relation to the Plan Estatal de Investigación Científica y Técnica y de Innovación 2017-2020.}
\blfootnote{\noindent F. Rostami Ghadi and K. K. Wong are with the Department of Electronic and Electrical Engineering, University College London, London, UK. K. K. Wong is also affiliated with the Department of Electronic Engineering, Kyung Hee University, Yongin-si, Gyeonggi-do 17104, Korea. (e-mail:$\{\rm f.rostamighadi,kai\text{-}kit.wong\}@ucl.ac.uk$).}
\blfootnote{\noindent M. Kaveh and R. Jäntti are with with the Department of Information and Communication Engineering, Aalto University, 02150 Espoo, Finland. (e-mail: $\rm \{masoud.kaveh, riku.jantti\}@aalto.fi$).}
\blfootnote{\noindent D. Mart\'in is with the Department of Computer Science, Escuela de Ingeniería Informática de Segovia, Universidad de Valladolid, Segovia, 40005, Spain (e-mail:$\rm diego.martin.andres@uva.es$).}
\blfootnote{Z. Yan is with the School of Cyber Engineering Xidian University, Xi'an, China. (e-mail:$\rm zyan@xidian.edu.cn$).}

	\vspace{0mm}
	\section{Introduction}\label{sec-intro}
	\IEEEPARstart{T}{he}
rapid advancements in communication technologies and the emergence of the sixth generation (6G) network promise unprecedented capabilities, including ultra-high data rates, massive connectivity, and near-zero latency. However, these advances come with significant energy challenges, particularly for devices that are energy-constrained \cite{zhang2022}. In 6G networks, the densification of base stations, the proliferation of edge devices, and the continuous demand for high-speed communication place tremendous strain on energy resources. Traditional battery powered devices face inherent limitations, such as frequent recharging, short operating lifetimes, and the environmental impact of large-scale battery production and disposal \cite{moloudian2024}. These constraints threaten the feasibility of deploying energy-efficient and sustainable networks on the scale envisioned for 6G.

To overcome these challenges, innovative solutions are required to ensure reliable and perpetual operation of devices without reliance on conventional battery technologies. wireless powered communication networks (WPCNs) have emerged as a promising solution by integrating wireless power transfer (WPT) with communication systems \cite{bi2016,wang2023,huang2022}. WPCNs enable energy-constrained devices to harvest energy wirelessly while maintaining data communication, thus eliminating the dependence on batteries for continuous operation \cite{mahmood2022}. This approach is particularly well suited to support the demands of 6G networks, where scalability, sustainability, and uninterrupted connectivity are paramount.
By combining energy transfer with data communication, WPCNs offer a sustainable and efficient framework to support a wide range of energy-constrained applications, such as the Internet of Things (IoT) \cite{chen2019,vu2021,do2020}, backscatter communication (BC) \cite{han2017,ghadi2022,lyu2017,kaveh2023}, sensor networks \cite{ijemaru2022}, smart grids \cite{liu2023,kaveh_ris2023}, vehicle-to-vehicle (V2V) \cite{shafiqurrahman2023}, and remote monitoring systems \cite{li2022}. 

On the other hand, multi-input multi-output (MIMO) systems face challenges such as high complexity in hardware and signal processing, limited spatial diversity in compact devices, and performance degradation in dynamic environments with poor channel conditions \cite{rajak2022}. 
Fluid antenna systems (FASs)  have shown promising potential to address these by dynamically adapting antenna positions to optimize signal reception, enabling higher diversity gains and better performance in fluctuating environments, while reducing the complexity associated with traditional multi-antenna setups \cite{wongfluid2021,wong2perf2020,wang2024}. 
FAS leverages fluidic or dielectric structures that are software-manageable, allowing the antenna to dynamically modify its shape, position, and configuration to optimize radiation characteristics \cite{new2024_tutorial,wong2023_part1,new2023}. This unique feature, which sets FAS apart from traditional antenna systems (TASs), is made possible by advancements in liquid-metal technologies and RF-switchable pixels. 
The integration of FASs into WPCNs presents a transformative opportunity to enhance the adaptability and efficiency of these networks \cite{ghadi2024}. 
These innovations allow FAS to outperform TAS by enhancing energy efficiency and improving the reliability of data transmission in WPCNs \cite{lin2024,ghadi2024_backscatter}.
In the context of WPCNs, FAS enables real-time adaptation to varying channel conditions, ensuring optimal energy harvesting and information transfer in challenging environments, such as those with high mobility or severe multipath fading. By dynamically selecting the optimal antenna configuration based on channel state information (CSI), FAS minimizes energy wastage, maximizes transmission reliability, and ensures uninterrupted operation \cite{ye2023}. This capability is perfectly in line with the energy efficient and sustainable design principles of WPCNs, making FAS particularly valuable for applications in IoT, smart city infrastructures, and remote monitoring systems. 

\subsection{Related Works}
Physical layer security (PLS) has gained significant attention in WPCNs due to its potential to protect sensitive information. Several works have explored challenges and solutions in this domain.
Jiang et al. \cite{jiang2016} analyzed the secrecy performance of wirelessly powered wiretap channels, where an energy-constrained multi-antenna source, powered by a dedicated power beacon (PB), communicates in the presence of a passive eavesdropper. They proposed time-switching protocols and evaluated maximum ratio transmission and transmit antenna selection schemes. Their results demonstrated the critical role of full CSI in achieving substantial secrecy diversity gain and highlighted that wireless power transfer randomness does not degrade secrecy performance in high signal-to-noise ratio (SNR) regimes.
Huang et al. \cite{huang2018} studied the secrecy performance in wireless communication networks powered by multiple eavesdroppers. They considered the maximum ratio transmission and transmit antenna selection schemes under both non-colluding and colluding eavesdroppers. Closed-form expressions for secrecy outage probability (SOP) and average secrecy capacity (ASC) were derived, showing that feedback delay significantly impacts secrecy diversity. Their findings indicate that the transmit antenna selection scheme performs comparably or even better than the maximum ratio transmission under moderate feedback delays.
Cao et al. \cite{cao2023} explored PLS in reconfigurable intelligent surfaces (RIS)-aided WPCNs for IoT. They proposed three RIS-aided secure WPC modes, optimizing IRS phase shifts to enhance energy and information transfer. Their analysis revealed that mode-I achieves the best connection outage probability (COP), mode-II the best SOP, and mode-II outperforms others in ASC with higher transmission power or RIS elements, demonstrating the critical role of IRS configurations in balancing reliability and security.

FASs have emerged as transformative technologies for improving PLS, with several studies exploring its potential in enhancing secure communication under diverse scenarios.
The authors in \cite{ghadi2024_fas_pls} investigated the PLS performance of FAS-aided communication systems over arbitrary correlated fading channels. They derived analytical expressions for secrecy metrics, such as ASC, SOP, and secrecy energy efficiency (SEE), using Gaussian copulas and Gauss-Laguerre quadrature. Their findings demonstrated that FAS with a single activated port outperforms TASs, such as maximal ratio combining (MRC) and antenna selection (AS), in secure transmission reliability.
Vega-Sánchez et al. \cite{vega2024} analyzed the SOP of FAS in systems experiencing spatially correlated Nakagami-\(m\) fading. They compared non-diversity FAS and maximum-gain combining FAS (MGC-FAS) against multiple antennas with MRC at the eavesdropper. Their closed-form approximations revealed that MGC-FAS achieves superior secrecy performance when the FAS tube's size is sufficiently large and the legitimate path's average SNR significantly exceeds the eavesdropper's link SNR.
The authors in \cite{ghadi2024_ris} explored the integration of RIS with FAS for secure communications in wiretap scenarios. Their analysis focused on deriving the cumulative distribution function (CDF) and probability density function (PDF) of the SNR at each node using Gaussian copulas, leading to compact expressions for the SOP. The results demonstrated that the combination of FAS with RIS substantially enhances secure communication performance, underscoring the synergy between these technologies in next-generation networks.

Most existing studies on FAS integration have predominantly focused on the use of orthogonal multiple access (OMA) schemes. Although OMA has been widely adopted because of its simplicity, it is not the most efficient approach when the CSI at the transmitter is known. In this regard, several studies have demonstrated the significant advantages of the non-orthogonal multiple access (NOMA) principle over OMA in FAS \cite{new2023_noma,tlebaldiyeva2023,zheng2024}. Using superposition coding and successive interference cancellation (SIC), NOMA allows multiple users to share the same frequency resources, improving spectral efficiency and system performance. Therefore, highlighting the advantages of NOMA in FAS, the authors in \cite{ghadi2024} integrated WPCN with FAS under a NOMA scenario, demonstrating how FAS outperforms TAS for both near and far users in the WPCN.
\subsection{Motivations and Contributions}

While various combinations of FAS, WPCNs, NOMA, and PLS have been individually studied in prior research, their simultaneous integration for secure communication remains unexplored. The potential synergy of combining FAS with NOMA-enabled WPCNs to enhance PLS presents a promising yet unresolved research question. As these systems become increasingly critical to next-generation networks, ensuring their security is of utmost importance. However, their unique structure introduces distinct vulnerabilities that demand dedicated attention. Therefore, building on the work in \cite{ghadi2024}, we propose extending the system to a secure communication framework by combining FAS with NOMA-enabled WPCNs. This scenario potentially involves two eavesdropping models: (i) an external eavesdropper positioned between legitimate users, attempting to intercept transmissions, and (ii) a far NOMA user acting as an internal eavesdropper, exploiting shared resources to compromise communication secrecy. In this regard, we analyze the secrecy performance of both the external and internal eavesdropping scenarios, deriving a compact analytical framework of key secrecy metrics such as SOP and ASC. The main contributions of this paper are summarized as follows:

\begin{itemize}
    \item We propose a novel FAS-aided WPCN framework that integrates energy efficiency with secure transmission under the NOMA scheme, introducing two eavesdropping scenarios: external and internal.
    
    \item For both eavesdropping scenarios, we first derive the CDF and PDF of the equivalent channel gain for each FAS-equipped NOMA user in terms of the multivariate normal distribution. Subsequently, we obtain compact analytical expressions for the SOP and ASC of both users under external and internal eavesdropping scenarios, employing Gaussian quadrature techniques.
        
    \item We validate our analytical findings through Monte Carlo simulations, which confirm their accuracy and demonstrate the superior performance of the proposed system. The results show that deploying FAS instead of TAS in the WPCN significantly enhances transmission reliability and security. Specifically, FAS-equipped NOMA users achieve substantially lower SOP and higher ASC compared to TAS-equipped users, under both external and internal eavesdropping scenarios.
\end{itemize}
\subsection{Organization}
The rest of this paper is organized as follows. Section \ref{sec-sys} presents the system model, including the channel model and the NOMA-FAS scheme. Section \ref{sec-secrecy} provides the secrecy performance analysis, where theoretical frameworks for SOP and ASC are derived. Section \ref{sec-num} discusses the numerical results that validate the analytical findings and finally, Section \ref{sec-con} concludes the paper. 
\subsection{Notation}
We use boldface upper and lower case letters for matrices
and vectors, e.g. $\mathbf{X}$ and $\mathbf{x}$, respectively. 
Moreover, $(\cdot)^T$ , $(\cdot)^{-1}$, $|\cdot|$, $\mathrm{max}\{\cdot\}$, and $\mathrm{det}(\cdot)$ stand for the transpose, inverse, magnitude, maximum, and determinant operators, respectively.

	\section{System Model}\label{sec-sys}
This section presents the system model, where a single-antenna transmitter communicates confidential information to multiple FAS-equipped users under a NOMA scheme. The description includes key assumptions, such as channel conditions and FAS adaptability, which are critical for analyzing secrecy metrics in the following sections.
    
    \subsection{Channel Model} 

We consider a PLS scenario in a WPCN, as illustrated in Fig. \ref{fig-system}. 
In this setup, a transmitter $\mathrm{t}$ aims to deliver confidential information to two legitimate users, $\mathrm{u_n}$ and $\mathrm{u_f}$, over wireless fading channels, while an external eavesdropper $\mathrm{e}$ attempts to intercept the transmitted signal. The transmitter broadcasts a superimposed signal containing the confidential message intended for the users under the NOMA scheme, utilizing energy wirelessly supplied by a power beacon (PB). In this context, the wireless channel between the PB and the transmitter $\mathrm{t}$ is referred to as the \textit{energy link}. The channels connecting the transmitter $\mathrm{t}$ to the near user $\mathrm{u_n}$ (referred to as the strong user) and the far user $\mathrm{u_f}$ (referred to as the weak user) are termed \textit{information links}. Additionally, the channel between the transmitter $\mathrm{t}$ and the external eavesdropper $\mathrm{e}$ is designated as the \textit{eavesdropper link}. Moreover, it is assumed that the far user $\mathrm{u_f}$ may act as an internal eavesdropper, attempting to decode the data signal intended for the near user $\mathrm{u_n}$. It is assumed that both the PB and the transmitter $\mathrm{t}$ are single-antenna nodes, whereas the NOMA users $i\in{\mathrm{u_n}, \mathrm{u_f}}$ and the eavesdropper $\mathrm{e}$ are equipped with planar FASs. Specifically, the nodes $j\in\left\{\mathrm{u_n,u_f,e}\right\}$ incorporate a grid structure comprising $N^l_j$ ports, which are uniformly distributed along a linear space of length $W^l_j\lambda$ for $l\in\left\{1,2\right\}$. Consequently, the total number of ports is given by $N_j = N^1_j \times N^2_j$, and the corresponding planar dimensions are $W_j = W^1_j\lambda \times W^2_j\lambda$. Additionally, to facilitate the transformation between the 2D and 1D index representations, a mapping function $\mathcal{F}\left(n_j\right) = \left(n_j^1, n_j^2\right)$ and its inverse $\mathcal{F}^{-1}\left(n_j^1, n_j^2\right) = n_j$ are defined, where $n_j\in\left\{1,\dots,N_j\right\}$ and $n_j^l\in\left\{1,\dots,N_j^l\right\}$.

The received signal at the $n_j$-th port of the node $j$ is expressed as
		\begin{align}
		y_j^{n_j}=\sqrt{P_\mathrm{p}\delta_j}h_{\mathrm{eq},j}^{n_j}\left(\sqrt{p_\mathrm{u_n}}x_\mathrm{u_n}+\sqrt{p_\mathrm{u_f}}x_\mathrm{u_f}\right)+z_j^{n_j},
	\end{align}
	in which $P_\mathrm{p}$ represents the power transmitted by the PB, and $h_{\mathrm{eq},j}^{n_j} = h_\mathrm{t} h_j^{n_j}$ denotes the equivalent channel coefficient at the $n_j$-th port of node $j$. This coefficient encompasses the channel coefficients of the energy link $h_\mathrm{t}$, the information links $h_i^{n_i}$, and the eavesdropper link $h_\mathrm{e}^{n_\mathrm{e}}$. Additionally, $x_i$ represents the symbol transmitted to user $i$ with unit power, while $p_i$ specifies the power allocation factor for user $i$, such that $p_\mathrm{u_1} + p_\mathrm{u_2} = 1$. The term $z_j^{n_j}$ denotes the additive white Gaussian noise (AWGN) at the $n_j$-th port of node $j$, characterized by zero mean and variance $\sigma_j^2$. Moreover, $\delta_j = L_\mathrm{p} d_\mathrm{t}^{-\alpha} d_j^{-\alpha}$ describes the path-loss, where $L_\mathrm{p}$ represents the frequency-dependent signal propagation loss, $\alpha > 2$ denotes the path-loss exponent, $d_\mathrm{t}$ is the distance between the PB and the transmitter $\mathrm{t}$, and $d_j$ is the distance between the transmitter $\mathrm{t}$ and node $j$. 
	\begin{figure}[!t]
	 	\centering	 \includegraphics[width=0.9998\columnwidth]{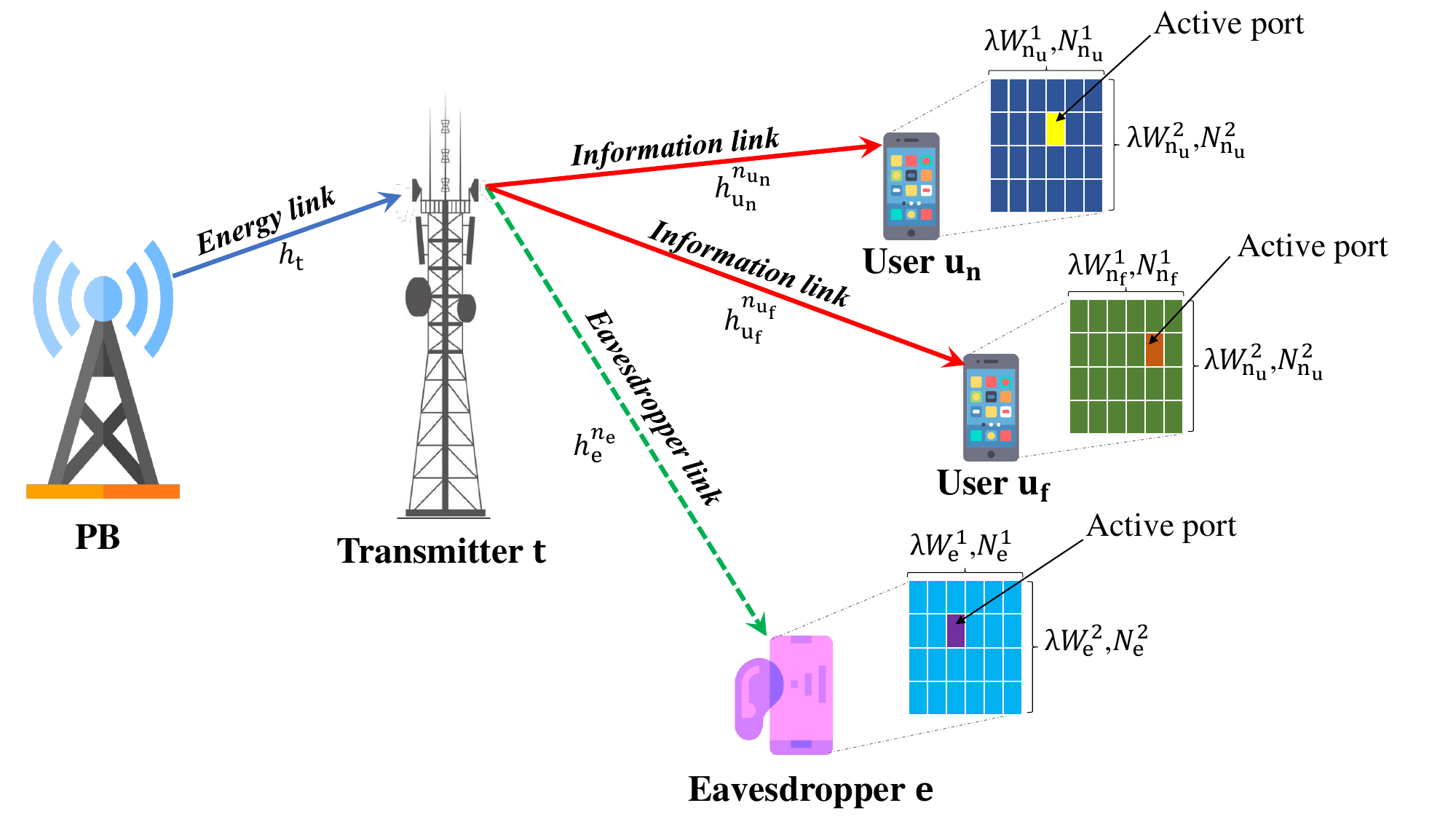}\vspace{0cm}
	 	\caption{System model: FAS-aided secure NOMA WPCN.}\vspace{-0.5cm}
	 	\label{fig-system}
	 \end{figure}
     
	Furthermore, the ports in the FAS can freely switch and are positioned in close proximity to each other, leading to spatial correlation in the channel coefficients at each NOMA FAS-equipped node. Assuming that the energy, information, and eavesdropper links experience Rayleigh fading, the covariance between two arbitrary ports $n_j$ and $\tilde{n}_j$ at each node $j$ in a three-dimensional (3D) rich-scattering environment can be expressed as
	\begin{align}
	\varrho_{n_j,\tilde{n}_j}^j=\mathcal{J}_0\left(2\pi\sqrt{\left(\frac{n^1_i-\tilde{n}^1_j}{N_j^1-1}W_j^1\right)^2+\left(\frac{n^2_j-\tilde{n}^2_j}{N_j^2-1}W_j^2\right)^2}\right),
	\end{align}
	 in which $\tilde{n}_j=\mathcal{F}^{-1}\left(\tilde{n}_j^1,\tilde{n}_j^2\right)$ so that $\tilde{n}_j^l\in\left\{1,\dots,N_j^l\right\}$ and $\mathcal{J}_0(.)$ denotes  the zero-order spherical Bessel function of the first kind.
	 
	 Given the concept of FAS, we assume that only the optimal port, which maximizes the received signal-to-noise ratio (SNR) at node $j$, is activated. Additionally, we consider a scenario where the channel state information (CSI) is available to the external eavesdropper, enabling it to select the best port to maximize its SNR. This assumption represents a worst-case scenario, as the eavesdropper is provided with ideal operating conditions. Specifically, the external eavesdropper is assumed to perform perfect multi-user detection, allowing it to differentiate between user data and completely eliminate inter-user interference. Moreover, due to the inherent spatial correlation among fluid antenna ports, acquiring full CSI requires only a limited number of port observations, irrespective of the total number of fluid antenna ports \cite{noor2023deep}. Consequently, the equivalent channel gain at node $j$ is defined as
	 		\begin{align}
	 	g_{\mathrm{fas},j}=\max\left\{g_{\mathrm{eq},1}^1,\dots, g_{\mathrm{eq},j}^{N_j}\right\} = g_{\mathrm{eq},j}^{n_{j}^*}, \label{eq-g-fas}
	 \end{align}
	 in which $g_{\mathrm{eq},j}^{n_{j}^*} = \left|h_{\mathrm{eq},j}^{n_j}\right|^2$ is the channel gain at $n_j$-th port of node $j$ and $n_j^*$ denotes the best port index at node $j$ that maximizes the channel gain, i.e.,
	 \begin{align}
	 	n_u^* = \arg\underset{n_u}{\max}\left\{\left|h_{\mathrm{eq},j}^{n_j}\right|^2\right\}.
	 \end{align}
	 
	 \subsection{NOMA-FAS Scheme}
In accordance with the NOMA principle, the near FAS-equipped user $\mathrm{u_n}$ employs successive interference cancellation (SIC) as part of its optimal decoding strategy. This approach requires the far FAS-equipped user $\mathrm{u_f}$ to decode its signal directly while treating any interference as background noise. Assuming that only the optimal port, which maximizes the received signal-to-interference-plus-noise ratio (SINR) for the FAS-equipped users, is active, the SINR for the SIC process is formulated as
\vspace{-6pt}
	\begin{align}
	\gamma_\mathrm{sic}=\frac{\overline{\gamma}_\mathrm{u_n}p_\mathrm{u_f}g_{\mathrm{fas},{\mathrm{u_n}}}}{\overline{\gamma}_\mathrm{u_n}p_\mathrm{u_n}g_{\mathrm{fas},{\mathrm{u_n}}}+1},\label{eq-sinr-sic}
	\end{align}	 
	where $\overline{\gamma}_\mathrm{u_n}=\frac{P_\mathrm{p}L_\mathrm{p}}{\sigma^2_\mathrm{u_n}d_\mathrm{t}^{\alpha}d_\mathrm{u_n}^{\alpha}}$ denotes the average SNR for the near user $\mathrm{u_n}$. After applying SIC, $\mathrm{u_n}$ eliminates the signal intended for $\mathrm{u_f}$ from its received signal and proceeds to decode its own required information. Consequently, the received SNR at the FAS-equipped user $\mathrm{u_n}$ is expressed as 
	\begin{align}
		\gamma_\mathrm{u_n}=\overline{\gamma}_\mathrm{u_n}p_\mathrm{u_n}g_{\mathrm{fas},{\mathrm{u_n}}}. \label{eq-snr-u1}
	\end{align}
Meanwhile, the far user $\mathrm{u_f}$ decodes its own signal directly but is unable to distinguish or cancel out the signal from the near user $\mathrm{u_n}$ within the superimposed transmission. Consequently, the SINR experienced by the far FAS-equipped user $\mathrm{u_f}$ is represented as
	\begin{align}
		\gamma_\mathrm{u_f}=\frac{\overline{\gamma}_\mathrm{u_f}p_\mathrm{u_f}g_{\mathrm{fas},{\mathrm{u_f}}}}{\overline{\gamma}_\mathrm{u_f}p_\mathrm{u_n}g_{\mathrm{fas},{\mathrm{u_f}}}+1},\label{eq-sinr-uf}
	\end{align}
where $\overline{\gamma}_\mathrm{u_f}=\frac{P_\mathrm{p}L_\mathrm{p}}{\sigma^2_\mathrm{u_f}d_\mathrm{t}^{\alpha}d_\mathrm{u_f}^{\alpha}}$ defines the average SNR for the near user.

The external eavesdropper intercepts the transmitted signals from the transmitter. As a result, the SNR at the external eavesdropper $\mathrm{e}$ for decoding $x_i$ is given by
\begin{align}
\gamma_{\mathrm{e},i}=\overline{\gamma}_\mathrm{e}p_ig_\mathrm{fas,e},\label{eq-sinr-e}
\end{align}
where $\overline{\gamma}_\mathrm{e}=\frac{P_\mathrm{p}L_\mathrm{p}}{\sigma^2_\mathrm{e}d_\mathrm{t}^{\alpha}d_\mathrm{e}^{\alpha}}$ represents the average SNR of the external eavesdropper.

In the case of internal eavesdropping, the far user attempts to decode the message intended for the near user after receiving $x_\mathrm{u_f}$. Hence, assuming that $x_\mathrm{u_f}$ is decoded without error at the far user, the SNR at the far user $\mathrm{u_f}$ for decoding $x_\mathrm{u_n}$ is expressed as
\begin{align}
\gamma_{\mathrm{u_f,n}} = \overline{\gamma}_\mathrm{u_f}p_\mathrm{u_n}g_\mathrm{fas,u_f}.\label{eq-sinr-ufn}
\end{align}

\section{Secrecy Performance Analysis} \label{sec-secrecy}
In this section, we first introduce the distribution of the equivalent channel for the FAS-equipped nodes. Using this distribution, we then derive analytical expressions for the SOP and the ASC, which provide insights into the secrecy performance of the system under various conditions. 
\subsection{SOP Analysis}
\subsubsection{Statistical Characteristics}
To analytically assess key secrecy performance metrics in the considered FAS-aided secure WPCN, the first step is to derive the distribution of the equivalent fading channel coefficients at the FAS-equipped nodes. Specifically, this involves calculating the maximum of $N_j$ correlated random variables (RVs), each of which is the product of two independent exponentially-distributed RVs (as shown in \eqref{eq-g-fas}). Since the energy, information, and eavesdropper links experience Rayleigh fading, the corresponding channel gains $g_\mathrm{t} = \left|h_\mathrm{t}\right|^2$ and $g_j^{n_j} = \left|h_j^{n_j}\right|^2$ follow an exponential distribution with unit mean. Consequently, the cumulative distribution function (CDF) of the equivalent channel gain $g_{\mathrm{eq},j}^{n_j} = g_\mathrm{t} g_j^{n_j}$ at node $j$ is derived as
\begin{align}
	F_{g_{\mathrm{eq},j}^{n_j}}(g)&=\int_0^\infty f_{g_\mathrm{t}}(g_\mathrm{t})F_{g_j^{n_j}}\left(\frac{g}{g_\mathrm{t}}\right)\mathrm{d}g_\mathrm{t}\\
	&=1-\int_0^\infty \exp\left(-\left(g_\mathrm{t}+\frac{g}{g_\mathrm{t}}\right)\right)\mathrm{d}g_\mathrm{t}\\
	& \overset{(a)}{=}1-2\sqrt{g}\mathcal{K}_1\left(2\sqrt{g}\right), \label{eq-cdf-eq}
\end{align}
in which $f_{g_\mathrm{t}}(\cdot)$ and $F_{g_i^{n_i}}(\cdot)$ are the marginal PDF and CDF of exponentially-distributed $g_\mathrm{t}$ and $g_j^{n_j}$. Additionally, $(a)$ is derived by employing the integral expression given in \cite[3.471.9]{gradshteyn2007table}, where $\mathcal{K}_1(\cdot)$ denotes the first-order modified Bessel function of the second kind.

Next, by applying the definition, the CDF of $g_{\mathrm{fas},j}$ can be mathematically formulated as follows
\begin{align}
	F_{g_{\mathrm{fas},j}}(g)&=\Pr\left(\max\left\{g_{\mathrm{eq},j}^1,\dots,g_{\mathrm{eq},j}^{N_j}\right\}\leq g\right)\\
	&=\Pr\left(g_{\mathrm{eq},j}^1\leq r,\dots,g_{\mathrm{eq},j}^{N_j}\leq g\right)\\
	&=F_{g_{\mathrm{eq},j}^1,\dots,g_{\mathrm{eq},j}^{N_j}}\left(g,\dots,g\right), \label{eq-fas1}
\end{align}
where  \eqref{eq-fas1} represents the definition of the multivariate distribution for $N_j$ correlated RVs $g_{\mathrm{eq},j}^{n_j}$, where $n_j \in \left\{1, \dots, N_j\right\}$. To derive this distribution, we apply Sklar's theorem \cite{Nelsen2006intro}, which allows us to construct the multivariate distribution of $N_j$ correlated arbitrary RVs, such as $g_{\mathrm{fas},j}$, by utilizing only the corresponding marginal distributions. Therefore, the CDF of $g_{\mathrm{fas},j}$, denoted as $F_{g_{\mathrm{fas},j}}(g)$, can be expressed as
\begin{align}\notag
	&F_{g_{\mathrm{fas},j}}\left(g\right)=\\
	&\quad\Phi_{\vec{R}_j}\left(\phi^{-1}\left(F_{g_{\mathrm{eq},j}^1}\left(g\right)\right)\hspace{-1mm},\dots,\phi^{-1}\left(F_{g_{\mathrm{eq},j}^{N_j}}\left(g\right)\right)\hspace{-1mm};\vartheta_j\right),\label{eq-cdf-fas}
\end{align}
where $\phi^{-1}\left(F_{g_{\mathrm{eq},j}^{n_j}}\left(g\right)\right)=\sqrt 2\mathrm{erf}^{-1}\left(2F_{g_{\mathrm{eq},j}^{n_j}}\left(g\right)-1\right)$ represents the quantile function of the standard normal distribution, and $\vartheta_j$ denotes the dependence parameter, specifically the Gaussian copula parameter. \footnote{The Gaussian copula effectively captures the spatial correlation between fluid antenna ports, particularly when the fluid antenna size is large. In this context, the dependence parameter of the Gaussian copula can be approximated by the covariance between two arbitrary ports, i.e., $\varrho^i \approx \vartheta_i$ \cite{ghadi2024gau}.} The term $\Phi_{\vec{R}_j}$ refers to the joint CDF of the multivariate normal distribution, which has a zero mean vector and a correlation matrix $\vec{R}_j$.
\begin{align}
	\vec{R}_j=\begin{bmatrix}
		\varrho_{1,1}^j & \varrho_{1,2}^j &\dots& \varrho_{1,N_j}^j\\
		\varrho_{2,1}^j & \varrho_{2,2}^j &\dots& \varrho_{2,N_j}^j\\ \vdots & \vdots & \ddots & \vdots\\
		\varrho_{N_j,1}^j & \varrho_{N_j,2}^j &\dots& \varrho_{N_j,N_j}^j
	\end{bmatrix}.
\end{align}
By applying the chain rule to \eqref{eq-cdf-fas}, the PDF of $g_{\mathrm{fas},j}$ is expressed as
\begin{align}\notag
f_{g_{\mathrm{fas},j}}\left(g\right)&=\prod_{n_j=1}^{N_j}f_{g_{\mathrm{eq},j}^{n_j}}\left(g\right)\\
&\quad\times\frac{\exp\left(-\frac{1}{2}\left(\boldsymbol{\phi}^{-1}_{N_j}\right)^T\left(\vec{R}^{-1}_{j}-\vec{I}\right)\boldsymbol{\phi}^{-1}_{N_j}\right)}{\sqrt{{\rm det}\left(\vec{R}_{j}\right)}},\label{eq-pdf-fas}
\end{align}
where ${\rm det}\left(\vec{R}_{j}\right)$ denotes the determinant of the correlation matrix $\vec{R}_{j}$, $\vec{I}$ is the identity matrix, and 
\begin{align}
\boldsymbol{\phi}^{-1}_{N_j}=\left[\phi^{-1}\left(F_{g_{\mathrm{eq},j}^{1}}(g)\right),\dots,\phi^{-1}\left(F_{g_{\mathrm{eq},j}^{N_j}}(g)\right)\right]^T.
\end{align}
Moreover, $f_{g_{\mathrm{eq},j}^{n_j}}\left(g\right)$ is the PDF of the equivalent channel gain $g_{\mathrm{eq},j}^{n_j}=g_\mathrm{t}g_j^{n_j}$ at the node $j$, which is derived as
\begin{align}\label{eq-pdf-eq}
f_{g_{\mathrm{eq},j}^{n_j}}(g) = 2\mathcal{K}_0\left(\sqrt{2g}\right),
\end{align}
where $\mathcal{K}_0$ denotes the zero-order modified Bessel function of the second kind.

Therefore, by inserting \eqref{eq-cdf-eq} into \eqref{eq-cdf-fas} and replacing  \eqref{eq-pdf-eq} into \eqref{eq-pdf-fas}, the CDF and PDF of $g_{\mathrm{fas},j}\left(g\right)$ are derived as shown in \eqref{eq-cdf} and \eqref{eq-pdf}, respectively, where $\phi^{-1}_{N_j}$ is defined in \eqref{eq-phi}.
\begin{figure*}
	\begin{align}\label{eq-cdf}
		F_{g_{\mathrm{fas},j}}\left(g\right)= \Phi_{\vec{R}_j}\left(\sqrt 2\mathrm{erf}^{-1}\left(1-4\sqrt{g}\mathcal{K}_1\left(2\sqrt{g}\right)\right)\hspace{-1mm},\dots,\sqrt 2\mathrm{erf}^{-1}\left(1-4\sqrt{g}\mathcal{K}_1\left(2\sqrt{g}\right)\right);\vartheta_j\right)
	\end{align}
	\hrulefill
	\begin{align}
		f_{g_{\mathrm{fas},j}}\left(g\right)=\left(2\mathcal{K}_0\left(\sqrt{2g}\right)\right)^{N_j}\frac{\exp\left(-\frac{1}{2}\left(\boldsymbol{\phi}^{-1}_{N_j}\right)^T\left(\vec{R}^{-1}_{j}-\vec{I}\right)\boldsymbol{\phi}^{-1}_{N_j}\right)}{\sqrt{{\rm det}\left(\vec{R}_{j}\right)}},\label{eq-pdf}
	\end{align}
	\begin{align}\label{eq-phi}
	\boldsymbol{\phi}^{-1}_{N_j}=\left[\sqrt 2\mathrm{erf}^{-1}\left(1-4\sqrt{g}\mathcal{K}_1\left(2\sqrt{g}\right)\right),\dots,\sqrt 2\mathrm{erf}^{-1}\left(1-4\sqrt{g}\mathcal{K}_1\left(2\sqrt{g}\right)\right)\right]^T
	\end{align}
		\hrulefill
\end{figure*}
\subsubsection{External Eavesdropping}
Here, we derive the analytical expressions of the SOP for both the near and far users under the scenario of external eavesdropping.

The SOP is defined as the likelihood that the instantaneous secrecy capacity $\mathcal{C}_{\mathrm{s},i}^\mathrm{ext}$ in the external eavesdropping scenario falls below a specified target secrecy rate $R_{\mathrm{s},i}\ge 0$, which can be expressed as
\begin{align}
P_{\mathrm{sop},i}^\mathrm{ext} = \Pr\left(\mathcal{C}_{\mathrm{s},i}^\mathrm{ext}\leq R_{\mathrm{s},i}\right),\label{eq-sop-def1}
	\end{align}
where $\mathcal{C}_{\mathrm{s},i}^\mathrm{ext}$ 
is defined as the difference between the channel capacities of the legitimate users and the eavesdropper links, i.e.,
\begin{align}
\mathcal{C}_{\mathrm{s},i}^\mathrm{ext} = \max\left\{\log_2\left(1+\gamma_{i}\right)-\log_2\left(1+\gamma_{\mathrm{e},i}\right),0\right\}.\label{eq-sc-def1}
\end{align}
\begin{theorem}
The SOP expressions of the near user $\mathrm{u_n}$ and the far user $\mathrm{u_n}$ in the proposed FAS-aided secure WPCN  under the external eavesdropping scenario are given by \eqref{eq-sop-n} and \eqref{eq-sop-f}, respectively, where $\nu_m = \frac{\overline{R}_\mathrm{n}\left(\overline{\gamma}_\mathrm{e}p_\mathrm{u_n}\epsilon_m+1\right)-1}{\overline{\gamma}_\mathrm{u_n}p_\mathrm{u_n}}$, $\mu_m = \frac{\overline{R}_\mathrm{f}\left(1+\overline{\gamma}_\mathrm{e}p_\mathrm{u_f}\epsilon_m\right)-1}{\overline{\gamma}_\mathrm{u_f}\left(p_\mathrm{u_f}-p_\mathrm{u_n}\left[\overline{R}_\mathrm{f}\left(1+\overline{\gamma}_\mathrm{e}p_\mathrm{u_f}\epsilon_m\right)-1\right]\right)}$, and  $\boldsymbol{\phi}^{-1}_{N_\mathrm{e}}=\left[\phi^{-1}\left(F_{g_{\mathrm{eq},\mathrm{e}}^{1}}(\epsilon_m)\right),\dots,\phi^{-1}\left(F_{g_{\mathrm{eq},\mathrm{e}}^{N_\mathrm{e}}}(\epsilon_m)\right)\right]^T$.
\begin{figure*}
\begin{align}\notag
P_{\mathrm{sop,u_n}}^\mathrm{ext}\approx & \sum_{m=0}^{M}\mathrm{e}^{\epsilon_m}\omega_m \left(2\mathcal{K}_0\left(\sqrt{2\epsilon_m}\right)\right)^{N_\mathrm{e}}\frac{\exp\left(-\frac{1}{2}\left(\boldsymbol{\phi}^{-1}_{N_\mathrm{e}}\right)^T\left(\vec{R}^{-1}_\mathrm{e}-\vec{I}\right)\boldsymbol{\phi}^{-1}_{N_\mathrm{e}}\right)}{\sqrt{{\rm det}\left(\vec{R}_\mathrm{e}\right)}}\\
&\times
\Phi_{\vec{R}_\mathrm{u_n}}\left(\sqrt 2\mathrm{erf}^{-1}\left(1-4\sqrt{\nu_m}\mathcal{K}_1\left(2\sqrt{\nu_m}\right)\right)\hspace{-1mm},\dots,\sqrt 2\mathrm{erf}^{-1}\left(1-4\sqrt{\nu_m}\mathcal{K}_1\left(2\sqrt{\nu_m}\right)\right);\vartheta_\mathrm{u_n}\right)\label{eq-sop-n}
\end{align}
\hrulefill
	\begin{align}\notag
	P_{\mathrm{sop,u_f}}^\mathrm{ext}\approx & \sum_{m=0}^{M}\mathrm{e}^{\epsilon_m}\omega_m \left(2\mathcal{K}_0\left(\sqrt{2\epsilon_m}\right)\right)^{N_\mathrm{e}}\frac{\exp\left(-\frac{1}{2}\left(\boldsymbol{\phi}^{-1}_{N_\mathrm{e}}\right)^T\left(\vec{R}^{-1}_\mathrm{e}-\vec{I}\right)\boldsymbol{\phi}^{-1}_{N_\mathrm{e}}\right)}{\sqrt{{\rm det}\left(\vec{R}_\mathrm{e}\right)}}\\
	&\times
	\Phi_{\vec{R}_\mathrm{u_f}}\left(\sqrt 2\mathrm{erf}^{-1}\left(1-4\sqrt{\mu_m}\mathcal{K}_1\left(2\sqrt{\mu_m}\right)\right)\hspace{-1mm},\dots,\sqrt 2\mathrm{erf}^{-1}\left(1-4\sqrt{\mu_m}\mathcal{K}_1\left(2\sqrt{\mu_m}\right)\right);\vartheta_\mathrm{u_f}\right)\label{eq-sop-f}
\end{align}
\hrulefill
\end{figure*}
\end{theorem}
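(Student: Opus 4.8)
The plan is to compute the SOP for each user by conditioning on the eavesdropper's equivalent channel gain $g_{\mathrm{fas},\mathrm{e}}$ and integrating over its density. Starting from the definition \eqref{eq-sop-def1}–\eqref{eq-sc-def1}, for the near user I would write $P_{\mathrm{sop,u_n}}^\mathrm{ext} = \Pr\!\left(\frac{1+\gamma_\mathrm{u_n}}{1+\gamma_{\mathrm{e},\mathrm{u_n}}} \le 2^{R_{\mathrm{s,u_n}}}\right)$, substitute $\gamma_\mathrm{u_n} = \overline{\gamma}_\mathrm{u_n}p_\mathrm{u_n}g_{\mathrm{fas},\mathrm{u_n}}$ from \eqref{eq-snr-u1} and $\gamma_{\mathrm{e},\mathrm{u_n}} = \overline{\gamma}_\mathrm{e}p_\mathrm{u_n}g_{\mathrm{fas},\mathrm{e}}$ from \eqref{eq-sinr-e}, and solve the inequality for $g_{\mathrm{fas},\mathrm{u_n}}$. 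With $\overline{R}_\mathrm{n} := 2^{R_{\mathrm{s,u_n}}}$ this gives the threshold $g_{\mathrm{fas},\mathrm{u_n}} \le \nu(g_{\mathrm{fas},\mathrm{e}})$ where $\nu(x) = \frac{\overline{R}_\mathrm{n}(\overline{\gamma}_\mathrm{e}p_\mathrm{u_n}x+1)-1}{\overline{\gamma}_\mathrm{u_n}p_\mathrm{u_n}}$, matching the stated $\nu_m$ once $x$ is replaced by the quadrature node $\epsilon_m$. Hence $P_{\mathrm{sop,u_n}}^\mathrm{ext} = \int_0^\infty F_{g_{\mathrm{fas},\mathrm{u_n}}}\!\big(\nu(x)\big)\, f_{g_{\mathrm{fas},\mathrm{e}}}(x)\, \mathrm{d}x$, into which I insert the CDF \eqref{eq-cdf} (evaluated at $\nu(x)$) and the PDF \eqref{eq-pdf} (for node $\mathrm{e}$, evaluated at $x$).

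For the far user the same conditioning argument applies to the SINR $\gamma_\mathrm{u_f}$ from \eqref{eq-sinr-uf}, which is now a fractional (bounded) function of $g_{\mathrm{fas},\mathrm{u_f}}$. Writing $\overline{R}_\mathrm{f} := 2^{R_{\mathrm{s,u_f}}}$ and solving $\frac{1+\gamma_\mathrm{u_f}}{1+\overline{\gamma}_\mathrm{e}p_\mathrm{u_f}g_{\mathrm{fas},\mathrm{e}}} \le \overline{R}_\mathrm{f}$ for $g_{\mathrm{fas},\mathrm{u_f}}$ yields the threshold $\mu(g_{\mathrm{fas},\mathrm{e}})$ with $\mu(x) = \frac{\overline{R}_\mathrm{f}(1+\overline{\gamma}_\mathrm{e}p_\mathrm{u_f}x)-1}{\overline{\gamma}_\mathrm{u_f}\big(p_\mathrm{u_f}-p_\mathrm{u_n}[\overline{R}_\mathrm{f}(1+\overline{\gamma}_\mathrm{e}p_\mathrm{u_f}x)-1]\big)}$, again matching $\mu_m$; here one must note that the denominator of $\gamma_\mathrm{u_f}$ flips the inequality sense in a controlled way and that a positivity condition on the denominator of $\mu$ is implicitly assumed (otherwise the outage is trivially one). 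The same integral representation $P_{\mathrm{sop,u_f}}^\mathrm{ext} = \int_0^\infty F_{g_{\mathrm{fas},\mathrm{u_f}}}\!\big(\mu(x)\big)\, f_{g_{\mathrm{fas},\mathrm{e}}}(x)\, \mathrm{d}x$ then follows.

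The final step is numerical: the integrals have the form $\int_0^\infty h(x)\, \mathrm{d}x$ with $h$ decaying like the Bessel-$\mathcal{K}_0$ factors, so I would apply a Gauss–Laguerre rule, writing $\int_0^\infty h(x)\,\mathrm{d}x = \int_0^\infty \mathrm{e}^{-x}\big(\mathrm{e}^{x}h(x)\big)\,\mathrm{d}x \approx \sum_{m=0}^{M}\omega_m\, \mathrm{e}^{\epsilon_m} h(\epsilon_m)$, where $\{\epsilon_m\}$ and $\{\omega_m\}$ are the Gauss–Laguerre abscissas and weights. Substituting $h(\epsilon_m) = F_{g_{\mathrm{fas},i}}(\nu_m\text{ or }\mu_m)\, f_{g_{\mathrm{fas},\mathrm{e}}}(\epsilon_m)$ and expanding the CDF/PDF via \eqref{eq-cdf}–\eqref{eq-pdf} produces exactly \eqref{eq-sop-n} and \eqref{eq-sop-f}, with $\boldsymbol{\phi}^{-1}_{N_\mathrm{e}}$ as defined in the theorem statement.

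The main obstacle will be the algebra of inverting the far-user SINR inequality: because $\gamma_\mathrm{u_f}$ is a ratio in $g_{\mathrm{fas},\mathrm{u_f}}$, solving for the channel gain requires care with the sign of the quantity $p_\mathrm{u_f}-p_\mathrm{u_n}[\overline{R}_\mathrm{f}(1+\overline{\gamma}_\mathrm{e}p_\mathrm{u_f}x)-1]$, and one has to argue that the event reduces cleanly to $g_{\mathrm{fas},\mathrm{u_f}} \le \mu(x)$ on the relevant range (and to a sure event otherwise), so that the CDF substitution is legitimate. The near-user case and the Gauss–Laguerre bookkeeping are routine by comparison; the only subtlety there is confirming that the common argument $\nu_m$ appears identically in every coordinate of $\Phi_{\vec{R}_\mathrm{u_n}}$, which is immediate since all marginals $F_{g_{\mathrm{eq},\mathrm{u_n}}^{n}}$ are identical.
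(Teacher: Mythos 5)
Your proposal follows essentially the same route as the paper's proof: condition on the eavesdropper's gain, invert the secrecy-capacity inequality to obtain the thresholds $\nu(\cdot)$ and $\mu(\cdot)$, write the SOP as $\int_0^\infty F_{g_{\mathrm{fas},i}}(\cdot)\,f_{g_{\mathrm{fas,e}}}(g)\,\mathrm{d}g$, and apply the Gauss--Laguerre rule. Your added remark about the sign of $p_\mathrm{u_f}-p_\mathrm{u_n}\bigl[\overline{R}_\mathrm{f}\bigl(1+\overline{\gamma}_\mathrm{e}p_\mathrm{u_f}x\bigr)-1\bigr]$ is a point the paper passes over silently, but it does not change the argument.
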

\begin{proof}
By substituting \eqref{eq-snr-u1} and \eqref{eq-sinr-e} into \eqref{eq-sop-def1}, we have
\begin{align}
&P_{\mathrm{sop,u_n}}^\mathrm{ext} = \Pr\left(\frac{1+\gamma_\mathrm{u_n}}{1+\gamma_\mathrm{e,u_n}}\leq 2^{R_{\mathrm{s,u_n}}}\right)\\
& = \Pr\left(\frac{1+\overline{\gamma}_\mathrm{u_n}p_\mathrm{u_n}g_{\mathrm{fas},{\mathrm{u_n}}}}{1+\overline{\gamma}_\mathrm{e}p_\mathrm{u_n}g_\mathrm{fas,e}}\leq \overline{R}_\mathrm{n}\right)\\
&=\Pr\left(g_\mathrm{fas,u_n}\leq\frac{\overline{R}_\mathrm{n}\left(\overline{\gamma}_\mathrm{e}p_\mathrm{u_n}g_\mathrm{fas,e}+1\right)-1}{\overline{\gamma}_\mathrm{u_n}p_\mathrm{u_n}}\right)\\
&=\int_0^\infty F_{g_\mathrm{fas,u_n}}\left(\frac{\overline{R}_\mathrm{n}\left(\overline{\gamma}_\mathrm{e}p_\mathrm{u_n}g+1\right)-1}{\overline{\gamma}_\mathrm{u_n}p_\mathrm{u_n}}\right)f_{g_\mathrm{fas,e}}(g)\mathrm{d}g,\label{eq-p1}
\end{align}
where $\overline{R}_\mathrm{n}=2^{R_{\mathrm{s,u_n}}}$. After substituting the corresponding CDF and PDF into \eqref{eq-p1}, it becomes evident that solving the integral analytically is intractable. Therefore, we apply the Gauss–Laguerre quadrature method, as outlined in the following lemma, to derive the SOP.
\begin{lemma}\label{lemma-glq}
The Gauss–Laguerre quadrature is defined as \cite{Abromowitz1972handbook}
\begin{align}
	\int_0^\infty \Lambda\left(x\right)\mathrm{d}x\approx  \sum_{m=0}^{M}\mathrm{e}^{\epsilon_m}\omega_m\Lambda\left(\epsilon_m\right),
\end{align}
where $M$ is the number of sample points used, $\epsilon_m$ denotes the $m$-th root of Laguerre polynomial $L_M\left(\epsilon_m\right)$, and $\omega_m=\frac{\epsilon_m}{2\left(M+1\right)^2L_{M+1}^2\left(\epsilon_m\right)}$ represents the corresponding weight. 
\end{lemma}
Now, by applying Lemma \ref{lemma-glq} to \eqref{eq-p1}, the expression for \eqref{eq-sop-n} is obtained, thus completing the proof. Similarly, or the SOP of the far user, we have
\begin{align}
&P_{\mathrm{sop,u_f}}^\mathrm{ext} = \Pr\left(\frac{1+\gamma_\mathrm{u_f}}{1+\gamma_\mathrm{e,u_f}}\leq 2^{R_{\mathrm{s,u_f}}}\right)\\
& = \Pr\left(\frac{1+\frac{\overline{\gamma}_\mathrm{u_f}p_\mathrm{u_f}g_{\mathrm{fas},{\mathrm{u_f}}}}{\overline{\gamma}_\mathrm{u_f}p_\mathrm{u_n}g_{\mathrm{fas},{\mathrm{u_f}}}+1}}{1+\overline{\gamma}_\mathrm{e}p_\mathrm{u_f}g_\mathrm{fas,e}}\leq \overline{R}_\mathrm{f}\right)\\
& = \Pr\hspace{-1mm}\left(\hspace{-1mm}g_\mathrm{fas,u_f}\hspace{-1mm}\leq \frac{\overline{R}_\mathrm{f}\left(1+\overline{\gamma}_\mathrm{e}p_\mathrm{u_f}g_\mathrm{fas,e}\right)-1}{\overline{\gamma}_\mathrm{u_f}\left(p_\mathrm{u_f}-p_\mathrm{u_n}\left[\overline{R}_\mathrm{f}\left(1+\overline{\gamma}_\mathrm{e}p_\mathrm{u_f}g_\mathrm{fas,e}\right)-1\right]\right)}\right)\\\notag
&=\int_0^\infty F_{g_\mathrm{fas,u_f}}\left(\frac{\overline{R}_\mathrm{f}\left(1+\overline{\gamma}_\mathrm{e}p_\mathrm{u_f}g\right)-1}{\overline{\gamma}_\mathrm{u_f}\left(p_\mathrm{u_f}-p_\mathrm{u_n}\left[\overline{R}_\mathrm{f}\left(1+\overline{\gamma}_\mathrm{e}p_\mathrm{u_f}g\right)-1\right]\right)}\right)\\
&\quad\quad\times f_{g_\mathrm{fas,e}}(g)\mathrm{d}g,\label{eq-p2}
\end{align}
where $\overline{R}_\mathrm{f}=2^{R_{\mathrm{s,u_f}}}$. After substituting the corresponding CDF and PDF into \eqref{eq-p2}, and applying Lemma \ref{lemma-glq}, the expression for \eqref{eq-sop-f} is derived, thus completing the proof.
\end{proof}
\subsubsection{Internal Eavesdropping}
Given the definition of the internal eavesdropping scenario, where the far user acts as an eavesdropper, the corresponding secrecy capacity is defined as
\begin{align}\label{eq-sc-def2}
	\mathcal{C}_{\mathrm{s,u_n}}^\mathrm{int} = \max\left\{\log_2\left(1+\gamma_\mathrm{u_n}\right)-\log_2\left(1+\gamma_{\mathrm{u_f,n}}\right),0\right\},
\end{align}
and the corresponding SOP is defined as 
\begin{align}
P_{\mathrm{sop,u_n}}^\mathrm{int} = \Pr\left(\mathcal{C}_{\mathrm{s,u_n}}^\mathrm{int}\leq R_{\mathrm{s,u_n}}\right). \label{eq-def-sop2}
\end{align}
\begin{theorem}
The SOP expression of the near user $\mathrm{u_n}$ in the proposed FAS-aided secure WPCN  under the internal eavesdropping scenario is given by \eqref{eq-sop-in}, where $\zeta_m=\frac{\overline{R}_\mathrm{n}\left(\overline{\gamma}_\mathrm{u_f}p_\mathrm{u_n}\epsilon_m+1\right)-1}{\overline{\gamma}_\mathrm{u_n}p_\mathrm{u_n}}$ and $\boldsymbol{\phi}^{-1}_{N_\mathrm{u_f}}=\left[\phi^{-1}\left(F_{g_{\mathrm{eq},\mathrm{u_f}}^{1}}(\epsilon_m)\right),\dots,\phi^{-1}\left(F_{g_{\mathrm{eq},\mathrm{u_f}}^{N_\mathrm{u_f}}}(\epsilon_m)\right)\right]^T$.
\begin{figure*}
	\begin{align}\notag
P_{\mathrm{sop,u_n}}^\mathrm{int}\approx & \sum_{m=0}^{M}\mathrm{e}^{\epsilon_m}\omega_m \left(2\mathcal{K}_0\left(\sqrt{2\epsilon_m}\right)\right)^{N_\mathrm{u_f}}\frac{\exp\left(-\frac{1}{2}\left(\boldsymbol{\phi}^{-1}_{N_\mathrm{u_f}}\right)^T\left(\vec{R}^{-1}_\mathrm{u_f}-\vec{I}\right)\boldsymbol{\phi}^{-1}_{N_\mathrm{u_f}}\right)}{\sqrt{{\rm det}\left(\vec{R}_\mathrm{u_f}\right)}}\\
&\times
\Phi_{\vec{R}_\mathrm{u_n}}\left(\sqrt 2\mathrm{erf}^{-1}\left(1-4\sqrt{\zeta_m}\mathcal{K}_1\left(2\sqrt{\zeta_m}\right)\right)\hspace{-1mm},\dots,\sqrt 2\mathrm{erf}^{-1}\left(1-4\sqrt{\zeta_m}\mathcal{K}_1\left(2\sqrt{\zeta_m}\right)\right);\vartheta_\mathrm{u_n}\right)\label{eq-sop-in}
\end{align}
\hrulefill
\end{figure*}
\end{theorem}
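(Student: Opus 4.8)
The plan is to mirror almost verbatim the derivation used for the near user under external eavesdropping in Theorem~1, since the internal-eavesdropping SNR $\gamma_{\mathrm{u_f,n}} = \overline{\gamma}_\mathrm{u_f}p_\mathrm{u_n}g_\mathrm{fas,u_f}$ in \eqref{eq-sinr-ufn} has exactly the same linear form in the FAS channel gain as the external-eavesdropper SNR $\gamma_\mathrm{e,u_n} = \overline{\gamma}_\mathrm{e}p_\mathrm{u_n}g_\mathrm{fas,e}$ in \eqref{eq-sinr-e}. The only replacements are $\overline{\gamma}_\mathrm{e}\mapsto\overline{\gamma}_\mathrm{u_f}$, $g_\mathrm{fas,e}\mapsto g_\mathrm{fas,u_f}$, and $\vec{R}_\mathrm{e}\mapsto\vec{R}_\mathrm{u_f}$.

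First I would substitute $\gamma_\mathrm{u_n}$ from \eqref{eq-snr-u1} and $\gamma_{\mathrm{u_f,n}}$ from \eqref{eq-sinr-ufn} into \eqref{eq-sc-def2}--\eqref{eq-def-sop2}, obtaining $P_{\mathrm{sop,u_n}}^\mathrm{int} = \Pr\!\left(\frac{1+\overline{\gamma}_\mathrm{u_n}p_\mathrm{u_n}g_\mathrm{fas,u_n}}{1+\overline{\gamma}_\mathrm{u_f}p_\mathrm{u_n}g_\mathrm{fas,u_f}}\leq \overline{R}_\mathrm{n}\right)$ with $\overline{R}_\mathrm{n}=2^{R_\mathrm{s,u_n}}$. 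Clearing the (positive) denominator and solving for $g_\mathrm{fas,u_n}$ isolates the event $g_\mathrm{fas,u_n}\leq \frac{\overline{R}_\mathrm{n}\left(\overline{\gamma}_\mathrm{u_f}p_\mathrm{u_n}g_\mathrm{fas,u_f}+1\right)-1}{\overline{\gamma}_\mathrm{u_n}p_\mathrm{u_n}}$; conditioning on $g_\mathrm{fas,u_f}=g$ and integrating against its density gives $P_{\mathrm{sop,u_n}}^\mathrm{int}=\int_0^\infty F_{g_\mathrm{fas,u_n}}\!\left(\frac{\overline{R}_\mathrm{n}\left(\overline{\gamma}_\mathrm{u_f}p_\mathrm{u_n}g+1\right)-1}{\overline{\gamma}_\mathrm{u_n}p_\mathrm{u_n}}\right) f_{g_\mathrm{fas,u_f}}(g)\,\mathrm{d}g$, where the argument of the CDF is precisely $\zeta_m$ with $g=\epsilon_m$.

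Next I would insert the closed-form CDF \eqref{eq-cdf} for node $\mathrm{u_n}$ and the closed-form PDF \eqref{eq-pdf} for node $\mathrm{u_f}$ into this integral. The integrand then contains products of $\mathcal{K}_0$, $\mathcal{K}_1$, inverse error functions, and the multivariate normal CDF $\Phi_{\vec{R}_\mathrm{u_n}}$, so it has no elementary antiderivative; I would therefore invoke Lemma~\ref{lemma-glq} (Gauss--Laguerre quadrature) with nodes $\epsilon_m$ and weights $\omega_m$, which produces \eqref{eq-sop-in} after identifying $\zeta_m$ as the CDF argument at $g=\epsilon_m$ and $\boldsymbol{\phi}^{-1}_{N_\mathrm{u_f}}$ as the quantile vector built from $F_{g_{\mathrm{eq},\mathrm{u_f}}^{n_\mathrm{u_f}}}(\epsilon_m)$.

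The main point to be careful about is the algebraic rearrangement of the ratio inequality: one must verify that the denominator $\overline{\gamma}_\mathrm{u_n}p_\mathrm{u_n}$ is strictly positive so the inequality direction is preserved, and note that --- unlike the far-user case in \eqref{eq-p2} --- no extra positivity constraint on the numerator arises here, because $\gamma_\mathrm{u_n}$ is a post-SIC SNR rather than an SINR with an interference floor. Beyond that, the argument is a direct transcription of the external-eavesdropping near-user proof, so no new obstacle is expected.
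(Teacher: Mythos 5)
Your proposal matches the paper's own proof essentially step for step: substitute \eqref{eq-snr-u1} and \eqref{eq-sinr-ufn} into \eqref{eq-def-sop2}, rearrange to isolate $g_\mathrm{fas,u_n}$, express the probability as $\int_0^\infty F_{g_\mathrm{fas,u_n}}(\cdot)\,f_{g_\mathrm{fas,u_f}}(g)\,\mathrm{d}g$, and apply Lemma~\ref{lemma-glq}. Your added observations on the sign of the denominator and the absence of an interference-floor constraint are correct but not needed beyond what the paper already does.
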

\begin{proof}
	As given by \eqref{eq-def-sop2}, we have
\begin{align}
	&P_{\mathrm{sop,u_n}}^\mathrm{int}= \Pr\left(\frac{1+\gamma_{\mathrm{u_n}}}{1+\gamma_\mathrm{u_f,n}}\leq 2^{R_{\mathrm{s,u_n}}}\right)\\
	&=\Pr\left(\frac{1+\overline{\gamma}_\mathrm{u_n}p_\mathrm{u_n}g_{\mathrm{fas},{\mathrm{u_n}}}}{1+\overline{\gamma}_\mathrm{u_f}p_\mathrm{u_n}g_\mathrm{fas,u_f}}\leq \overline{R}_\mathrm{n}\right)\\
&=\Pr\left(g_\mathrm{fas,u_n}\leq\frac{\overline{R}_\mathrm{n}\left(\overline{\gamma}_\mathrm{u_f}p_\mathrm{u_n}g_\mathrm{fas,u_f}+1\right)-1}{\overline{\gamma}_\mathrm{u_n}p_\mathrm{u_n}}\right)\\
&=\int_0^\infty F_{g_\mathrm{fas,u_n}}\left(\frac{\overline{R}_\mathrm{n}\left(\overline{\gamma}_\mathrm{u_f}p_\mathrm{u_n}g+1\right)-1}{\overline{\gamma}_\mathrm{u_n}p_\mathrm{u_n}}\right)f_{g_\mathrm{fas,u_f}}(g)\mathrm{d}g.\label{eq-p3}
\end{align}
Now, by applying Lemma \ref{lemma-glq} to \eqref{eq-p3}, \eqref{eq-sop-in} is derived and the proof is completed.  
\end{proof}

	\subsection{ASC Analysis}
	\subsubsection{External Eavesdropping}Given the definition of secrecy capacity in \eqref{eq-sc-def1}, the ASC under the external eavesdropping scenario is defined as
	\begin{align}
\overline{\mathcal{C}}_{\mathrm{s},i}^\mathrm{ext}=\int_0^\infty\int_0^{\gamma_{\mathrm{u_n}}}\mathcal{C}_{\mathrm{s},i}^\mathrm{ext}\left(\gamma_i,\gamma_{\mathrm{e},i}\right)f_{\gamma_i}\left(\gamma_i\right)f_{\gamma_{\mathrm{e},i}}\left(\gamma_{\mathrm{e},i}\right)\mathrm{d}\gamma_i\mathrm{d}\gamma_{\mathrm{e},i},\label{eq-asc-def1}
	\end{align}
where $f_{\gamma_i}\left(\gamma_i\right)$ and $f_{\gamma_{\mathrm{e},i}}\left(\gamma_{\mathrm{e},i}\right)$ are the PDFs of $\gamma_i$ and $\gamma_{\mathrm{e},i}$, respectively, which can be obtained through the transformation of RVs.  
	\begin{theorem}
	The ASC expressions of the near user $\mathrm{u_n}$ and the far user $\mathrm{u_n}$ in the proposed FAS-aided secure WPCN  under the external eavesdropping scenario are given by \eqref{eq-asc-un} and \eqref{eq-asc-uf}, respectively, where $\eta_m = \frac{\epsilon_m}{\overline{\gamma}_\mathrm{u_n}p_\mathrm{u_n}}$, $\tau_m = \frac{\epsilon_m}{\overline{\gamma}_\mathrm{e}p_\mathrm{u_f}}$, $\xi_{\tilde{m}}=\frac{\psi_{\tilde{m}}}{\overline{\gamma}_\mathrm{u_f}\left(p_\mathrm{u_f}-\psi_{\tilde{m}} p_\mathrm{u_n}\right)}$, and $\chi_{\tilde{m}}=\frac{\psi_{\tilde{m}}}{\overline{\gamma}_\mathrm{e}p_{\mathrm{u_f}}}$.
	\begin{figure*}
	\begin{align}\notag
		\overline{\mathcal{C}}_{\mathrm{s,u_n}}^\mathrm{ext} \approx &\frac{1}{\ln 2}\sum_{m=0}^M \frac{\mathrm{e}^{\epsilon_m}\omega_m}{1+\epsilon_m}\left[1-\Phi_{\vec{R}_\mathrm{u_n}}\left(\sqrt 2\mathrm{erf}^{-1}\left(1-4\sqrt{\eta_m}\mathcal{K}_1\left(2\sqrt{\eta_m}\right)\right)\hspace{-1mm},\dots,\sqrt 2\mathrm{erf}^{-1}\left(1-4\sqrt{\eta_m}\mathcal{K}_1\left(2\sqrt{\eta_m}\right)\right);\vartheta_\mathrm{u_n}\right)\right]\\
		&\times \Phi_{\vec{R}_\mathrm{e}}\left(\sqrt 2\mathrm{erf}^{-1}\left(1-4\sqrt{\tau_m}\mathcal{K}_1\left(2\sqrt{\tau_m}\right)\right)\hspace{-1mm},\dots,\sqrt 2\mathrm{erf}^{-1}\left(1-4\sqrt{\tau_m}\mathcal{K}_1\left(2\sqrt{\tau_m}\right)\right);\vartheta_\mathrm{e}\right)\label{eq-asc-un}
	\end{align}
	\hrulefill
		\begin{align}\notag
		\overline{\mathcal{C}}_{\mathrm{s,u_f}}^\mathrm{ext} \approx &\frac{p_\mathrm{u_f}}{p_\mathrm{u_n}2\ln 2 }\sum_{\tilde{m}=0}^{\tilde{M}} \frac{\mathrm{e}^{\psi_m}\tilde{\omega}_{\tilde{m}}}{1+\psi_{\tilde{m}}}\left[1-\Phi_{\vec{R}_\mathrm{u_f}}\left(\sqrt 2\mathrm{erf}^{-1}\left(1-4\sqrt{\xi_{\tilde{m}}}\mathcal{K}_1\left(2\sqrt{\xi_{\tilde{m}}}\right)\right)\hspace{-1mm},\dots,\sqrt 2\mathrm{erf}^{-1}\left(1-4\sqrt{\xi_{\tilde{m}}}\mathcal{K}_1\left(2\sqrt{\xi_{\tilde{m}}}\right)\right);\vartheta_\mathrm{u_f}\right)\right]\\
		&\times \Phi_{\vec{R}_\mathrm{e}}\left(\sqrt 2\mathrm{erf}^{-1}\left(1-4\sqrt{\chi_{\tilde{m}}}\mathcal{K}_1\left(2\sqrt{\chi_{\tilde{m}}}\right)\right)\hspace{-1mm},\dots,\sqrt 2\mathrm{erf}^{-1}\left(1-4\sqrt{\chi_{\tilde{m}}}\mathcal{K}_1\left(2\sqrt{\chi_{\tilde{m}}}\right)\right);\vartheta_\mathrm{e}\right)\label{eq-asc-uf}
	\end{align}
	\hrulefill
	\end{figure*}
	\end{theorem}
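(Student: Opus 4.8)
The plan is to bypass the two-fold integral in \eqref{eq-asc-def1} and reduce the ASC to a single integral that involves only the marginal CDFs of the legitimate and eavesdropper SNRs, both of which are already available in closed form through \eqref{eq-cdf}. Writing $\log_2(1+\gamma_i)-\log_2(1+\gamma_{\mathrm{e},i})=\frac{1}{\ln 2}\int_{\gamma_{\mathrm{e},i}}^{\gamma_i}\frac{\mathrm{d}t}{1+t}$ on the region $\{\gamma_{\mathrm{e},i}<\gamma_i\}$ where the secrecy capacity is positive, substituting into \eqref{eq-asc-def1}, and interchanging the order of integration over $\{\gamma_{\mathrm{e},i}<t<\gamma_i\}$, one obtains
\begin{align}
\overline{\mathcal{C}}_{\mathrm{s},i}^\mathrm{ext}=\frac{1}{\ln 2}\int_0^\infty\frac{F_{\gamma_{\mathrm{e},i}}(t)\left[1-F_{\gamma_i}(t)\right]}{1+t}\,\mathrm{d}t.
\end{align}
This identity is the backbone of the proof; it removes the need to transform $f_{\gamma_i}$ and $f_{\gamma_{\mathrm{e},i}}$ explicitly.

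For the near user, the SNR scalings in \eqref{eq-snr-u1} and \eqref{eq-sinr-e} give $F_{\gamma_{\mathrm{u_n}}}(t)=F_{g_{\mathrm{fas},\mathrm{u_n}}}\!\big(t/(\overline{\gamma}_{\mathrm{u_n}}p_{\mathrm{u_n}})\big)$, while $F_{\gamma_{\mathrm{e},\mathrm{u_n}}}(t)$ equals $F_{g_{\mathrm{fas},\mathrm{e}}}$ evaluated at a linearly scaled argument; I would then replace both by \eqref{eq-cdf}, so that the integrand becomes a ratio whose numerator is the product $\Phi_{\vec{R}_{\mathrm{e}}}(\cdots)\big[1-\Phi_{\vec{R}_{\mathrm{u_n}}}(\cdots)\big]$ of two multivariate-normal CDFs composed with the quantile maps appearing in \eqref{eq-cdf}, divided by $1+t$. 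Since this has no elementary antiderivative, I would invoke Lemma~\ref{lemma-glq}: evaluating at the Laguerre nodes $t=\epsilon_m$ turns $1/(1+t)$ into $1/(1+\epsilon_m)$ and turns the scaled arguments into exactly $\eta_m$ and $\tau_m$ inside $\Phi_{\vec{R}_{\mathrm{u_n}}}$ and $\Phi_{\vec{R}_{\mathrm{e}}}$, which is \eqref{eq-asc-un}.

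The far user follows the same template but with one structural wrinkle coming from NOMA. From \eqref{eq-sinr-uf}, $\gamma_{\mathrm{u_f}}$ is a monotone but \emph{bounded} function of $g_{\mathrm{fas},\mathrm{u_f}}$, sweeping $[0,p_{\mathrm{u_f}}/p_{\mathrm{u_n}})$ as $g_{\mathrm{fas},\mathrm{u_f}}$ sweeps $[0,\infty)$; inverting $\gamma_{\mathrm{u_f}}\le t$ gives $F_{\gamma_{\mathrm{u_f}}}(t)=F_{g_{\mathrm{fas},\mathrm{u_f}}}\!\big(t/[\overline{\gamma}_{\mathrm{u_f}}(p_{\mathrm{u_f}}-p_{\mathrm{u_n}}t)]\big)$ for $t<p_{\mathrm{u_f}}/p_{\mathrm{u_n}}$ and $F_{\gamma_{\mathrm{u_f}}}(t)=1$ afterwards, so $1-F_{\gamma_{\mathrm{u_f}}}(t)$ annihilates the tail and the $t$-integral collapses onto the finite interval $[0,p_{\mathrm{u_f}}/p_{\mathrm{u_n}})$. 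I would then rescale that interval onto the standard quadrature domain by a linear change of variable whose Jacobian produces the prefactor $p_{\mathrm{u_f}}/(2p_{\mathrm{u_n}})$, combine it with $F_{\gamma_{\mathrm{e},\mathrm{u_f}}}$ from \eqref{eq-sinr-e}, replace both CDFs by \eqref{eq-cdf}, and apply the Gaussian quadrature of Lemma~\ref{lemma-glq} with nodes $\psi_{\tilde m}$ and weights $\tilde\omega_{\tilde m}$; identifying $\xi_{\tilde m}$ and $\chi_{\tilde m}$ as the arguments then handed to $\Phi_{\vec{R}_{\mathrm{u_f}}}$ and $\Phi_{\vec{R}_{\mathrm{e}}}$ yields \eqref{eq-asc-uf}.

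The only genuinely delicate step is the first one: reorganizing the double integral \eqref{eq-asc-def1} into the single-integral CDF form, i.e.\ correctly identifying the region $\{\gamma_{\mathrm{e},i}<t<\gamma_i\}$ and justifying the swap of integrals. Everything afterward is mechanical --- substitute \eqref{eq-cdf} and call Lemma~\ref{lemma-glq} --- with the one extra care point being the truncation of the far-user integral at the NOMA saturation ceiling $p_{\mathrm{u_f}}/p_{\mathrm{u_n}}$ and the bookkeeping of the resulting change of variable and prefactor.
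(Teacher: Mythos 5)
Your proposal follows essentially the same route as the paper: reduce the double integral to the single-integral CDF form $\frac{1}{\ln 2}\int_0^\infty\frac{F_{\gamma_{\mathrm{e},i}}(t)\,\overline{F}_{\gamma_i}(t)}{1+t}\,\mathrm{d}t$ (which the paper simply cites from prior work, whereas you actually derive it via Fubini --- a welcome addition), substitute the transformed CDFs, and apply Gauss--Laguerre quadrature for the near user and a finite-interval Gauss--Legendre quadrature with Jacobian $p_{\mathrm{u_f}}/(2p_{\mathrm{u_n}})$ for the far user truncated at the NOMA ceiling $p_{\mathrm{u_f}}/p_{\mathrm{u_n}}$. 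The only slip is cosmetic: for the far user you cite Lemma~\ref{lemma-glq} (Gauss--Laguerre) while correctly describing the Gauss--Legendre rule of Lemma~\ref{lemma2}, whose nodes $\psi_{\tilde{m}}$ and weights $\tilde{\omega}_{\tilde{m}}$ you are in fact using.
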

	\begin{proof}
	By extending \eqref{eq-asc-def1}, we have \cite{ghadi2024_fas_pls}
	\begin{align}
	\overline{\mathcal{C}}_{\mathrm{s},i}^\mathrm{ext} = \frac{1}{\ln 2}\int_0^\infty \frac{\overline{F}_{\gamma_i}\left(\gamma_i\right)}{1+\gamma_i}F_{\gamma_{\mathrm{e},i}}\left(\gamma_i\right)\mathrm{d}\gamma_i,\label{eq-p4} 
\end{align}
	where $\overline{F}_{\gamma_i}\left(\gamma_i\right)=1-F_{\gamma_i}\left(\gamma_i\right)$ represents the complementary CDF (CCDF) of $\gamma_\mathrm{u_n}$ and $F_{\gamma_{\mathrm{e},i}}\left(\gamma_{i}\right)$ is the CDF of $\gamma_{\mathrm{e},i}$. By applying the RV transformation technique to \eqref{eq-cdf}, the CDF expressions of the node $j$ are given by
	\begin{align}
	F_{\gamma_\mathrm{u_n}}\left(\gamma\right) = F_{g_\mathrm{fas,u_n}}\left(\frac{\gamma}{\overline{\gamma}_\mathrm{u_n}p_\mathrm{u_n}}\right),\label{eq-sinr-cdf-n}
	\end{align}
		\begin{align}
		F_{\gamma_\mathrm{u_f}}\left(\gamma\right) = \begin{cases}
			F_{g_\mathrm{fas,u_f}}\left(\frac{\gamma}{\overline{\gamma}_\mathrm{u_f}\left(p_\mathrm{u_f}-\gamma p_\mathrm{u_n}\right)}\right), & 0 \leq \gamma < \frac{p_\mathrm{u_f}}{p_\mathrm{u_n}},\\
			1, & \gamma\ge \frac{p_\mathrm{u_f}}{p_\mathrm{u_n}},
			\end{cases}\label{eq-sinr-cdf-f}
	\end{align}
	\begin{align}
	F_{\gamma_\mathrm{e,u_n}}\left(\gamma\right)=F_{g_\mathrm{fas,e}}\left(\frac{\gamma}{\overline{\gamma}_\mathrm{e}p_\mathrm{u_n}}\right),
	\end{align}
	and
		\begin{align}
		F_{\gamma_\mathrm{e,u_f}}\left(\gamma\right)=F_{g_\mathrm{fas,e}}\left(\frac{\gamma}{\overline{\gamma}_\mathrm{e}p_\mathrm{u_f}}\right).
	\end{align}
	After inserting \eqref{eq-sinr-cdf-n} and \eqref{eq-sinr-cdf-f} into \eqref{eq-p4}, $\overline{\mathcal{C}}_{\mathrm{s},i}^\mathrm{ext}$ can be rewritten as
	\begin{align}	\overline{\mathcal{C}}_{\mathrm{s,u_n}}^\mathrm{ext} = \frac{1}{\ln 2}\int_0^\infty \frac{1-F_{g_\mathrm{fas,u_n}}\left(\frac{\gamma}{\overline{\gamma}_\mathrm{u_n}p_\mathrm{u_n}}\right)}{1+\gamma}F_{g_\mathrm{fas,e}}\left(\frac{\gamma}{\overline{\gamma}_\mathrm{e}p_\mathrm{u_n}}\right)\mathrm{d}\gamma,\label{eq-p5}
	\end{align}
	and
	\begin{align}\notag
	\overline{\mathcal{C}}_{\mathrm{s,u_f}}^\mathrm{ext} = &\frac{1}{\ln 2}\int_0^{\frac{p_\mathrm{uf}}{p_\mathrm{u_n}}} \frac{1-F_{g_\mathrm{fas,u_f}}\left(\frac{\gamma}{\overline{\gamma}_\mathrm{u_f}\left(p_\mathrm{u_f}-\gamma p_\mathrm{u_n}\right)}\right)}{1+\gamma}\\
	&\hspace{3cm}\times F_{g_\mathrm{fas,e}}\left(\frac{\gamma}{\overline{\gamma}_\mathrm{e}p_\mathrm{u_n}}\right)\mathrm{d}\gamma. \label{eq-p6}
	\end{align}
	It is clear that solving \eqref{eq-p5} and \eqref{eq-p6} directly is mathematically complex. Therefore, by applying Lemma \ref{lemma-glq} to \eqref{eq-p5}, $\overline{\mathcal{C}}_{\mathrm{s,u_n}}^\mathrm{ext}$ is obtained as \eqref{eq-asc-un}, completing the proof. For $\overline{\mathcal{C}}_{\mathrm{s,u_f}}^\mathrm{ext}$, we utilize the Gauss-Legendre quadrature method, which is defined by the following lemma.
	\begin{lemma}\label{lemma2}
		The Gauss-Legendre quadrature is defined as \cite{Abromowitz1972handbook}
		\begin{align}
		\int_a^b \Lambda\left(x\right)\approx \frac{b-a}{2}\sum_{\tilde{m}}^{\tilde{M}}\tilde{\omega}_{\tilde{m}}\Lambda\left(\frac{b-a}{2}\psi_{\tilde{m}}+\frac{b+a}{2}\right),
		\end{align}
		where $\tilde{M}$ is the number of sample points used, $\psi_{\tilde{m}}$  is the $\tilde{m}$-th root of Legendre polynomial $P_{\tilde{m}}\left(\psi_{\tilde{m}}\right)$, and $\tilde{\omega}_{\tilde{m}}=\frac{2}{\left(1-\psi_{\tilde{m}}^2\right)\left[P'_{\tilde{m}}\left(\psi_{\mathrm{m}}\right)\right]^2}$  denotes the corresponding quadrature weight.
	\end{lemma} 
	Now, by applying  Lemma \ref{lemma2}, \eqref{eq-asc-uf} is derived and the proof is accomplished. 
	\end{proof}
	\subsubsection{Internal Eavesdropping} Based on the secrecy capacity definition in \eqref{eq-sc-def2}, the ASC under the internal eavesdropping scenario is given by
	\begin{align}
	\overline{\mathcal{C}}_{\mathrm{s},i}^\mathrm{int}=\int_0^\infty\int_0^{\gamma_{\mathrm{u_n}}}\mathcal{C}_{\mathrm{s},i}^\mathrm{ext}\left(\gamma_i,\gamma_{\mathrm{e},i}\right)f_{\gamma_i}\left(\gamma_i\right)f_{\gamma_{\mathrm{e},i}}\left(\gamma_{\mathrm{e},i}\right)\mathrm{d}\gamma_i\mathrm{d}\gamma_{\mathrm{e},i}.\label{eq-pp1}
	\end{align}
	\begin{theorem}
	The ASC expression of the near user $\mathrm{u_n}$ and the far user $\mathrm{u_n}$ in the proposed FAS-aided secure WPCN  under the internal eavesdropping scenario is given by \eqref{eq-asc-int}, where $\eta_m = \frac{\epsilon_m}{\overline{\gamma}_\mathrm{u_n}p_\mathrm{u_n}}$ and $\upsilon_m=\frac{\epsilon_m}{	\overline{\gamma}_\mathrm{u_f}p_\mathrm{u_n}}$.
	\begin{figure*}
	\begin{align}\nonumber
	\overline{\mathcal{C}}_{\mathrm{s,u_n}}^\mathrm{int} \approx&\frac{1}{\ln 2}\sum_{m=0}^M \frac{\mathrm{e}^{\epsilon_m}\omega_m}{1+\epsilon_m}\left[1-\Phi_{\vec{R}_\mathrm{u_n}}\left(\sqrt 2\mathrm{erf}^{-1}\left(1-4\sqrt{\eta_m}\mathcal{K}_1\left(2\sqrt{\eta_m}\right)\right)\hspace{-1mm},\dots,\sqrt 2\mathrm{erf}^{-1}\left(1-4\sqrt{\eta_m}\mathcal{K}_1\left(2\sqrt{\eta_m}\right)\right);\vartheta_\mathrm{u_n}\right)\right]\\
	&\times \Phi_{\vec{R}_\mathrm{u_f}}\left(\sqrt 2\mathrm{erf}^{-1}\left(1-4\sqrt{\upsilon_m}\mathcal{K}_1\left(2\sqrt{\upsilon_m}\right)\right)\hspace{-1mm},\dots,\sqrt 2\mathrm{erf}^{-1}\left(1-4\sqrt{\upsilon_m}\mathcal{K}_1\left(2\sqrt{\upsilon_m}\right)\right);\vartheta_\mathrm{u_f}\right)\label{eq-asc-int}
	\end{align}
	\hrulefill
	\end{figure*}
	\end{theorem}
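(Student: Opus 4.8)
The plan is to mirror the derivation of the near user's external-eavesdropping ASC in \eqref{eq-asc-un}: the internal scenario is recovered by replacing the eavesdropper SNR $\gamma_{\mathrm{e,u_n}}$ with the far user's decoding SNR $\gamma_{\mathrm{u_f,n}}$ from \eqref{eq-sinr-ufn}. First I would reduce the double integral \eqref{eq-pp1} to the single-integral form
\begin{align}
\overline{\mathcal{C}}_{\mathrm{s,u_n}}^\mathrm{int}=\frac{1}{\ln 2}\int_0^\infty\frac{\overline{F}_{\gamma_\mathrm{u_n}}(\gamma)}{1+\gamma}\,F_{\gamma_{\mathrm{u_f,n}}}(\gamma)\,\mathrm{d}\gamma,
\end{align}
with $\overline{F}_{\gamma_\mathrm{u_n}}=1-F_{\gamma_\mathrm{u_n}}$, exactly as in \eqref{eq-p4}: write $\mathcal{C}_{\mathrm{s,u_n}}^\mathrm{int}=\frac{1}{\ln 2}\int_{\gamma_{\mathrm{u_f,n}}}^{\gamma_\mathrm{u_n}}\frac{\mathrm{d}t}{1+t}$ on the event $\{\gamma_\mathrm{u_n}>\gamma_{\mathrm{u_f,n}}\}$, take expectations, interchange the order of integration, and use the node-wise independence of $g_\mathrm{fas,u_n}$ and $g_\mathrm{fas,u_f}$ already employed in the earlier SOP/ASC derivations.

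Next I would carry out the RV transformation. In contrast to the far-user SINR in \eqref{eq-sinr-cdf-f}, both $\gamma_\mathrm{u_n}=\overline{\gamma}_\mathrm{u_n}p_\mathrm{u_n}g_\mathrm{fas,u_n}$ and $\gamma_{\mathrm{u_f,n}}=\overline{\gamma}_\mathrm{u_f}p_\mathrm{u_n}g_\mathrm{fas,u_f}$ are plain positive scalings of the FAS channel gains (there is no residual interference, since SIC at $\mathrm{u_f}$ is assumed error-free), so $F_{\gamma_\mathrm{u_n}}(\gamma)=F_{g_\mathrm{fas,u_n}}\!\big(\gamma/(\overline{\gamma}_\mathrm{u_n}p_\mathrm{u_n})\big)$ and $F_{\gamma_{\mathrm{u_f,n}}}(\gamma)=F_{g_\mathrm{fas,u_f}}\!\big(\gamma/(\overline{\gamma}_\mathrm{u_f}p_\mathrm{u_n})\big)$. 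Substituting the closed form \eqref{eq-cdf} then turns the integrand into the product of $1-\Phi_{\vec{R}_\mathrm{u_n}}(\cdots)$ and $\Phi_{\vec{R}_\mathrm{u_f}}(\cdots)$, each carrying the familiar $\sqrt 2\,\mathrm{erf}^{-1}\!\big(1-4\sqrt{\cdot}\,\mathcal{K}_1(2\sqrt{\cdot})\big)$ arguments used throughout Section~\ref{sec-secrecy}.

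Finally I would invoke the Gauss--Laguerre quadrature of Lemma~\ref{lemma-glq}, identifying $\Lambda(\gamma)$ with the whole integrand; evaluating at the Laguerre nodes $\epsilon_m$ produces exactly the scaled arguments $\eta_m=\epsilon_m/(\overline{\gamma}_\mathrm{u_n}p_\mathrm{u_n})$ inside $\Phi_{\vec{R}_\mathrm{u_n}}$ and $\upsilon_m=\epsilon_m/(\overline{\gamma}_\mathrm{u_f}p_\mathrm{u_n})$ inside $\Phi_{\vec{R}_\mathrm{u_f}}$, together with the weight $\frac{\mathrm{e}^{\epsilon_m}\omega_m}{1+\epsilon_m}$, i.e. precisely \eqref{eq-asc-int}. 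The change of variables and the quadrature bookkeeping are routine; the delicate points are (i) justifying the interchange of integration and the node-wise independence invoked in the first step, and (ii) confirming that the integrand --- which tends to $\Phi_{\vec{R}}(-\infty,\dots)=0$ as $\gamma\to 0$ and decays through the $\mathcal{K}_0/\mathcal{K}_1$ factors as $\gamma\to\infty$ --- is regular enough for the truncated Gauss--Laguerre sum to be accurate, which I would corroborate against the Monte Carlo results in Section~\ref{sec-num}.
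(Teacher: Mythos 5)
Your proposal follows essentially the same route as the paper's proof: reducing \eqref{eq-pp1} to the single-integral form $\frac{1}{\ln 2}\int_0^\infty \frac{\overline{F}_{\gamma_{\mathrm{u_n}}}(\gamma)}{1+\gamma}F_{\gamma_{\mathrm{u_f,n}}}(\gamma)\,\mathrm{d}\gamma$, applying the RV transformations $F_{\gamma_{\mathrm{u_n}}}(\gamma)=F_{g_{\mathrm{fas,u_n}}}\big(\gamma/(\overline{\gamma}_{\mathrm{u_n}}p_{\mathrm{u_n}})\big)$ and $F_{\gamma_{\mathrm{u_f,n}}}(\gamma)=F_{g_{\mathrm{fas,u_f}}}\big(\gamma/(\overline{\gamma}_{\mathrm{u_f}}p_{\mathrm{u_n}})\big)$, and then invoking Gauss--Laguerre quadrature, exactly as in \eqref{eq-p7}--\eqref{eq-p8}. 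Your additional remarks on the interchange of integration, node-wise independence, and the behavior of the integrand at the endpoints are correct elaborations of steps the paper handles by citation, so the proposal is sound and matches the paper's argument.
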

	\begin{proof}
		By extending \eqref{eq-pp1}, we have
		\begin{align}
		\overline{\mathcal{C}}_{\mathrm{s,u_n}}^\mathrm{int} = \frac{1}{\ln 2}\int_0^\infty \frac{\overline{F}_{\gamma_\mathrm{u_n }}\left(\gamma_\mathrm{u_n}\right)}{1+\gamma_\mathrm{u_n}}F_{\gamma_{\mathrm{u_f,n}}}\left(\gamma_\mathrm{u_n}\right)\mathrm{d}\gamma_\mathrm{u_n},\label{eq-p7} 
	\end{align}
	where $F_{\gamma_{\mathrm{u_f,n}}}\left(\gamma_\mathrm{u_n}\right)$ denotes the CDF of $\gamma_{\mathrm{u_f,n}}$, which is derived using the RV transformation technique to \eqref{eq-sinr-ufn}, i.e., 
			\begin{align}
		F_{\gamma_\mathrm{u_f,n}}\left(\gamma\right)=F_{g_\mathrm{fas,u_f}}\left(\frac{\gamma}{	\overline{\gamma}_\mathrm{u_f}p_\mathrm{u_n}}\right).\label{eq-cdf-ufn}
	\end{align}
Now, by substituting \eqref{eq-cdf-ufn} and \eqref{eq-sinr-cdf-n} in \eqref{eq-p7}, we have
\begin{align}
\overline{\mathcal{C}}_{\mathrm{s,u_n}}^\mathrm{int}\hspace{-1mm} =\frac{1}{\ln 2} \hspace{-1mm}\int_0^\infty \frac{1-F_{g_\mathrm{fas,u_n}}\left(\frac{\gamma}{\overline{\gamma}_\mathrm{u_n}p_\mathrm{u_n}}\right)}{1+\gamma}F_{g_\mathrm{fas,u_f}}\left(\frac{\gamma}{	\overline{\gamma}_\mathrm{u_f}p_\mathrm{u_n}}\right)\mathrm{d}\gamma.\label{eq-p8}
\end{align}
After substituting the corresponding CDFs into \eqref{eq-p8} and applying Lemma \ref{lemma-glq}, \eqref{eq-asc-int} is obtained, completing the proof.
	\end{proof}

\section{Numerical Results}\label{sec-num}
In this section, we evaluate the analytical derivations of the SOP and ASC, validating them through a thorough comparison with Monte Carlo simulations. For this evaluation, we define the channel model parameters as $p_\mathrm{u_n}=0.4$, $p_\mathrm{u_f}=0.6$, $\alpha=3$, $L_\mathrm{p}=1$,  $d_\mathrm{t}=100$m, $=d_\mathrm{u_n}=20$m, $d_\mathrm{u_f}=60$m, $d_\mathrm{e}=100$m, $\sigma^2_i=-90$dBm, $\sigma^2_{\mathrm{e}}=-80$dBm $P_\mathrm{p}=30$dBm, $R_{\mathrm{s},i}=0.5$bps, $N_{\mathrm{e}}=4$, $W_{\mathrm{e}}=1\lambda^2$, $M=\tilde{M}=2$. Additionally, since the analytical derivations are expressed in terms of the joint multivariate normal distribution, we numerically implement these calculations using MATLAB, specifically the $\textsf{copulacdf}$ and $\textsf{copulapdf}$ functions. To further enhance the simulation, we apply the algorithm introduced in \cite{ghadi2024gau} to generate the Gaussian copula within the context of the system model under consideration.

\begin{figure}[!t]
	\centering
\includegraphics[width=0.9\columnwidth]{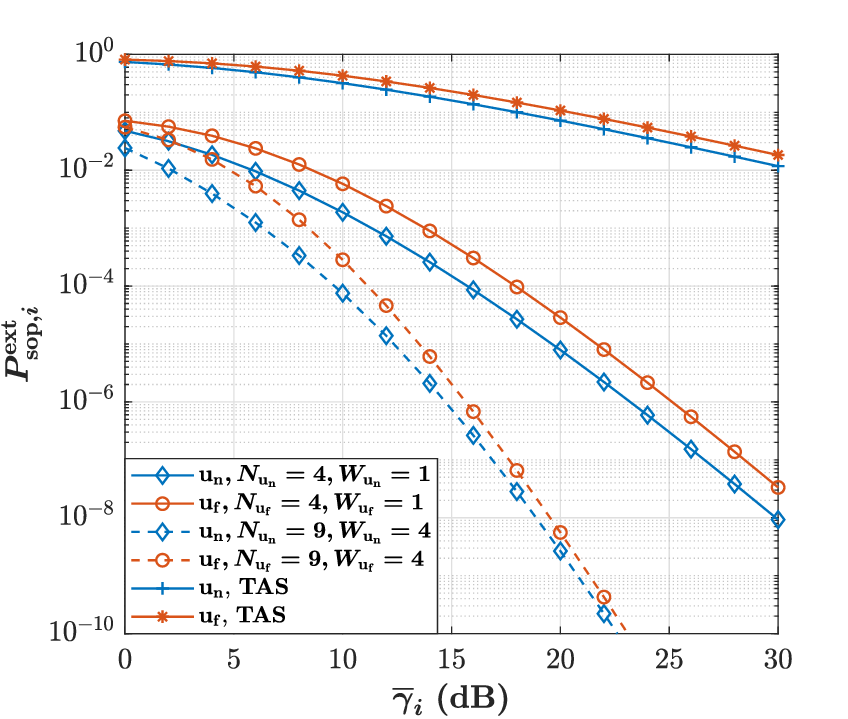}
	\caption{The external SOP of user $i$ versus the average SNR $\overline{\gamma}_i$ when $\overline{\gamma}_\mathrm{e}=0$dB. }\vspace{0cm}
\label{fig-sop_un}
\end{figure}
Fig. \ref{fig-sop_un} indicates the performance of SOP for NOMA user $i$ in terms of the average SNR $\overline{\gamma}_i$ over the external eavesdropping scenario. First of all, it can be observed that the near user $\mathrm{u_n}$, for both FAS and TAS, achieves a lower SOP compared to the far user $\mathrm{u_f}$. The main reason for this is that, despite $\mathrm{u_n}$ having a lower power allocation, it benefits from the SIC process, which allows it to mitigate interference and decode its signal with higher reliability, leading to improved security and, consequently, a lower SOP. Additionally, we observe that when the NOMA user $i$ benefits from the FAS, the SOP is significantly lower compared to when the user is equipped with the TAS. The advantage of FAS lies in its ability to dynamically adjust the antenna configuration, which improves signal quality and enhances system performance, thereby reducing the likelihood of secrecy outage. For instance, we observe that the SOP of the near FAS-equipped NOMA user $\mathrm{u_n}$ with $N_\mathrm{u_n}=9$ and $W_\mathrm{u_n}=4\lambda^2$ is on the order of $10^{-6}$ when $\overline{\gamma}_\mathrm{u_n} = 15$ dB, whereas the SOP for the same user under TAS is approximately $10^{-1}$. It is also observed that the by simultaneously increasing $N_i$ and $W_i$, spatial correlation between fluid antenna
ports becomes balanced, and consequently lower SOP is obtained. 

\begin{figure}[!t]
	\centering
\includegraphics[width=0.9\columnwidth]{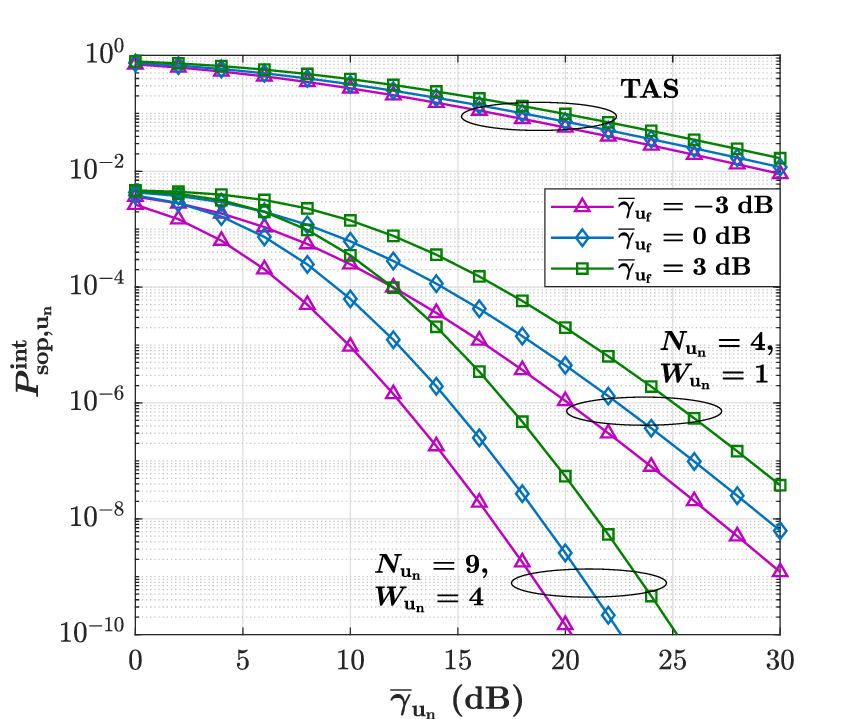}
	\caption{The internal SOP of user $\mathrm{u_n}$ versus the average SNR $\overline{\gamma}_{\mathrm{u_n}}$ when $N_{\mathrm{u_f}}=4$ and $W_{\mathrm{u_f}}=1\lambda^2$.}\vspace{0cm}
\label{fig-sop_int}
\end{figure}
Fig. \ref{fig-sop_int} illustrates the behavior of the SOP in terms of $\overline{\gamma}_\mathrm{u_n}$ when the far user $\mathrm{u_f}$ acts as an internal eavesdropper. We can see for a fixed value of $\overline{\gamma}_{\mathrm{u_f}}$, the SOP of the near user $u_\mathrm{u_n}$ decreases as $\overline{\gamma}_{\mathrm{u_n}}$ increases since the near NOMA user experience better channel condition than the internal eavesdropper (i.e., the far user). Furthermore, it can be observed that increasing $\overline{\gamma}_{\mathrm{u_f}}$ leads to a higher SOP, which is reasonable since the channel conditions of the internal eavesdropper improve.
Additionally, by comparing Fig. \ref{fig-sop_un} and Fig. \ref{fig-sop_int}, we can observe that the SOP of the near user $\mathrm{u_n}$ in the internal eavesdropping scenario performs better than in the external eavesdropping case. The main reason behind this is that in the internal eavesdropping scenario, the far user $\mathrm{u_f}$ (acting as the eavesdropper) typically experiences a weaker signal due to their distance from the transmitter, which makes it more difficult for them to decode the message. In contrast, an external eavesdropper, who may be located in a more favorable position with respect to the signal transmission, could intercept the signal more easily, leading to a higher likelihood of successful eavesdropping. Furthermore, power control strategies and PLS techniques can be more effectively optimized when the eavesdropper is internal, allowing the system to enhance the near user's security by minimizing the interference and signal leakage to the far user. Therefore, although the impact of external and internal eavesdroppers on the near user's SOP can vary significantly depending on specific factors, such as power control strategies and the effectiveness of PLS techniques, in most typical scenarios, the near user's SOP is lower (indicating more secure transmission) under internal eavesdropping compared to external eavesdropping. 

\begin{figure}[!t]
	\centering
\includegraphics[width=0.9\columnwidth]{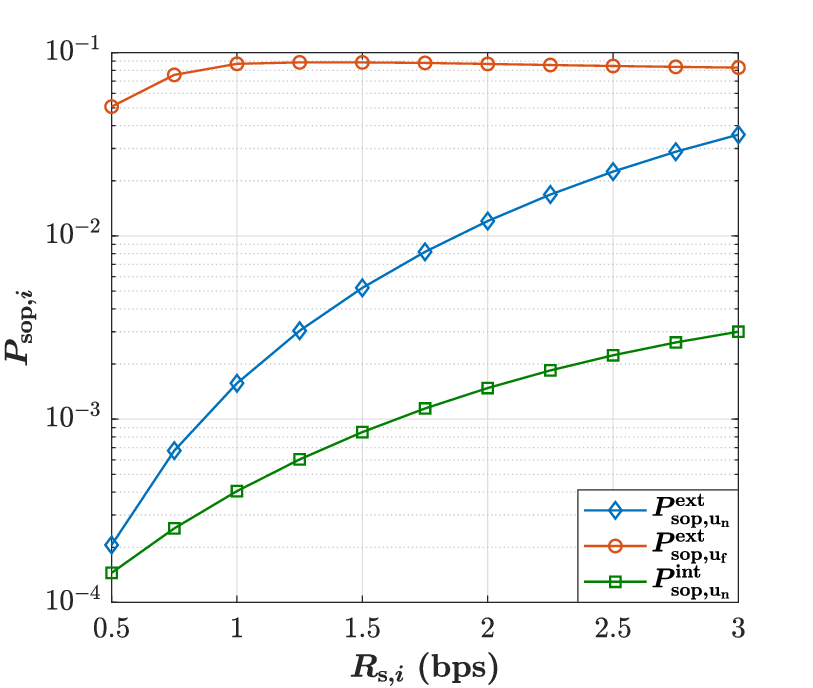}
	\caption{The SOP of user $i$ versus the secrecy rate $R_{\mathrm{s},i}$ when $N_{j}=4$ and $W_{j}=1\lambda^2$.}\vspace{0cm}
\label{fig-sop_rate}
\end{figure}
Fig. \ref{fig-sop_rate} indicates the the relationship between the SOP and secrecy rate $R_{\mathrm{s},i}$ for the FAS-equipped NOMA user $i$. It is observed  that when  $R_{\mathrm{s},i}$ is very low, it becomes easier for the secrecy capacity to exceed the target secrecy rate, resulting  in low SOP for both the near user and the far user. Specifically, since the near user has a strong legitimate channel due to its proximity to the transmitter, ensuring a high achievable secrecy capacity, the SOP of $\mathrm{u_n}$ is very low in both the external and internal eavesdropping scenarios. The far user $\mathrm{u_f}$ has a weaker legitimate channel because of its distance from the transmitter, however, for small $R_{\mathrm{s,u_f}}$ even the far user's capacity can exceed the threshold, leading to a low SOP for $\mathrm{u_f}$. As the target secrecy rate increases, it becomes more difficult for the secrecy capacity to exceed $R_{\mathrm{s},i}$, causing the SOP to rise for both NOMA users. In this case, while $\mathrm{u_n}$ can still maintain a relatively low SOP due to its strong legitimate channel in both the external and internal eavesdropping scenarios, the SOP will steadily increase as  $R_\mathrm{s,u_n}$ rises  further. For the far user $\mathrm{u_f}$, the SOP rises much faster as $R_\mathrm{s,u_f}$ grows compared to the near user,  because the far user's legitimate channel is weaker. 

\begin{figure}[!t]
	\centering
\includegraphics[width=0.9\columnwidth]{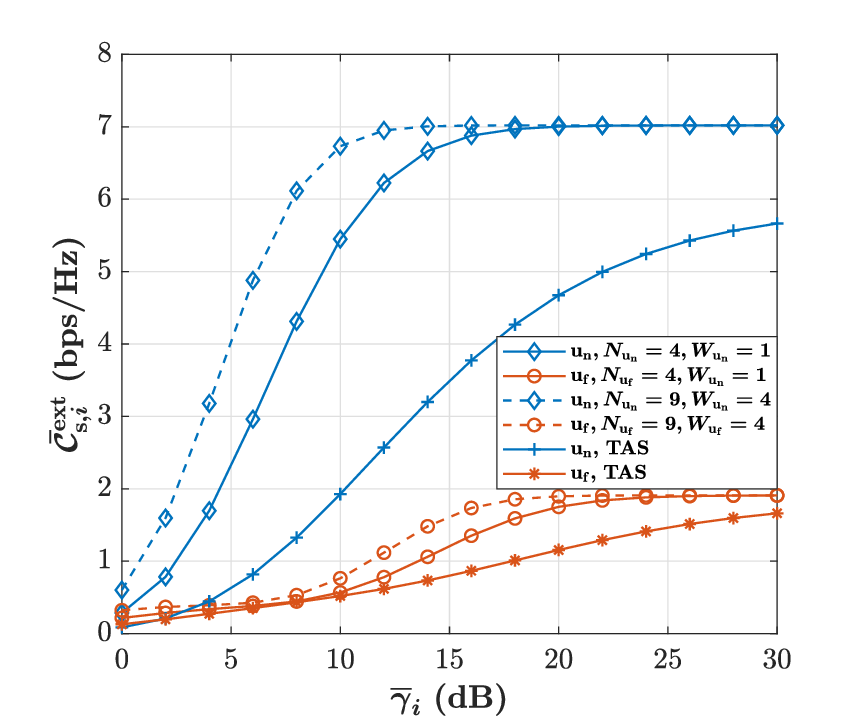}
	\caption{The external ASC of user $i$ versus the average SNR $\overline{\gamma}_i$.}\vspace{0cm}
\label{fig-asc_ext}
\end{figure}
\begin{figure}[!t]
	\centering
\includegraphics[width=0.9\columnwidth]{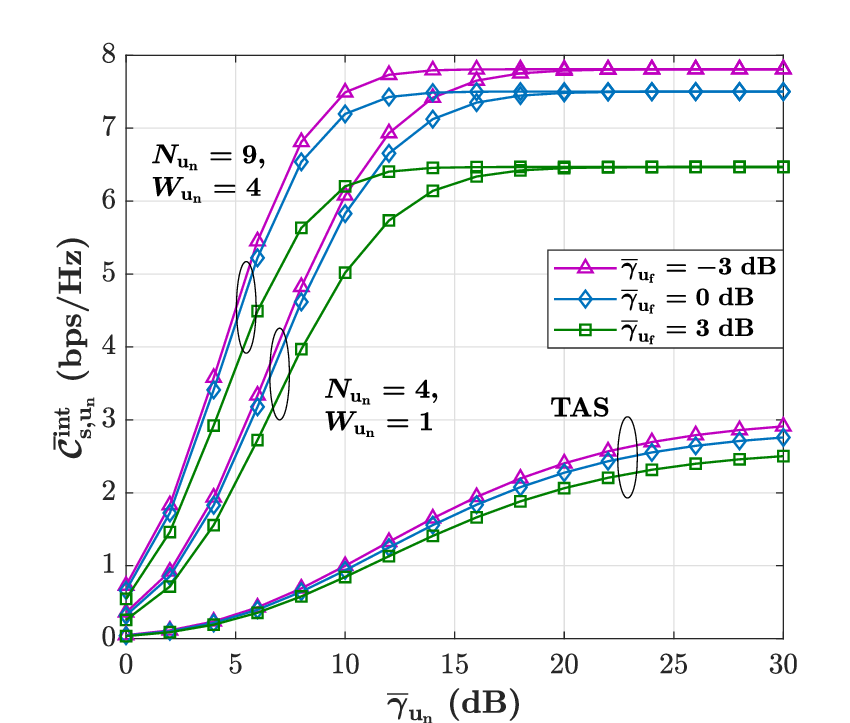}
	\caption{The internal ASC of user $\mathrm{u_n}$ versus the average SNR $\overline{\gamma}_{\mathrm{u_n}}$ when $N_{\mathrm{u_f}}=4$ and $W_{\mathrm{u_f}}=1\lambda^2$.}\vspace{0cm}
\label{fig-asc_int}
\end{figure}
Figs. \ref{fig-asc_ext} and \ref{fig-asc_int} illustrate the performance of the ASC in terms of the average SNR, $\overline{\gamma}_i$, under the external and internal eavesdropping scenarios, respectively. In both scenarios, we observe that as $\overline{\gamma}_i$ increases, the ASC of user $i$ initially improves due to the dominant enhancement in the information link compared to the eavesdropper's link. However, the ASC eventually approaches a saturation point in the high-SNR regime, where further increases in transmit power fail to significantly boost the ASC. This is primarily because the relative difference between the information and eavesdropper channels stabilizes, and additional SNR gains benefit both channels similarly, limiting further improvements in secrecy capacity. Moreover, it is evident that FAS-equipped NOMA users achieve higher ASC values compared to the TAS case in both external and internal eavesdropping scenarios. Additionally, we observe that increasing the number of fluid antenna ports over a larger fluid antenna size results in higher ASC values, as the spatial separation between ports becomes more balanced.


\section{Conclusion}\label{sec-con}
This study focused on evaluating the influence of FAS on the security performance of WPCNs. We proposed a framework where the transmitter relies on energy harvesting from a power beacon to transmit secure information to FAS-enabled users while countering the threats posed by both external and internal eavesdroppers. The use of NOMA added complexity by introducing a dual role for the far user, which could also act as an internal eavesdropper. To address these challenges, we conducted a thorough analysis of the equivalent channel characteristics at each node. Based on these derivations, we developed concise expressions for critical security metrics, such as the SOP and the ASC, using the Gaussian quadrature technique. The results of this research highlight the potential of FAS to enhance security in energy-constrained wireless systems and provide a theoretical foundation for designing robust communication protocols under such scenarios.


\end{document}